\documentclass[preprint,12pt]{article}
\usepackage{bbm}
\usepackage{amsfonts}
\usepackage{mathrsfs}
\usepackage{amssymb}
\usepackage{amsmath}
\usepackage{amsthm}
\usepackage{color}
\usepackage{url}
\usepackage{hyperref}
\usepackage{multirow}
\usepackage{bbding}
\usepackage[ruled,vlined]{algorithm2e}
\usepackage{latexsym}
\usepackage[final]{graphicx}
\usepackage{cases}
\usepackage{amsthm}
\usepackage{appendix}

\usepackage{lastpage}
\usepackage{epsfig}
\usepackage[misc]{ifsym}
\usepackage{lineno}

\newtheorem{definition}{Definition}
\newtheorem{theorem}{Theorem}
\newtheorem{lemma}{Lemma}
\newtheorem{remark}{Remark}

\newtheorem{example}{Counterexample}

\begin{document}
\begin{center}

\textbf{\Large{Asymptotically ideal Disjunctive Hierarchical Secret Sharing Scheme with an Explicit Construction}}

\vspace{0.7cm}

Jian Ding$^{1,2}$ $\mbox{(\Letter)}$, Cheng Wang$^{2,1}$, Haifeng Yu$^{2}$, Hongju Li$^1$, Cheng Shu$^2$

\vspace{3mm}
\footnotesize{
$^1$School of Mathematics and Big Data, Chaohu University, Hefei 238024, China\\
$^2$ School of Artificial Intelligence and Big Data, Hefei University, Hefei 230000, China}

\footnotetext{\footnotesize {This research was supported in part by University Natural Science Research Project of Anhui Province under Grants 2024AH051324 and 2025AHGXZK30056, in part by Projects of Chaohu University under Grants KYQD-202220 and hxkt20250327, and in part by Anhui Provincial Teaching Innovation Team for Linear Algebra under Grant 2024cxtd147.}}

\footnotetext{\footnotesize {\noindent$\mbox{\Letter}$ Corresponding author.}}

\footnotetext{\footnotesize {E-mail addresses: dingjian\_happy@163.com (J. Ding).}}
\end{center}

\noindent\textbf{Abstract}: Disjunctive Hierarchical Secret Sharing (DHSS) scheme is a secret sharing scheme in which the set of all participants is partitioned into disjoint subsets. Each disjoint subset is said to be a level, and different levels have different degrees of trust and different thresholds. If the number of cooperating participants from a given level falls to meet its threshold, the shortfall can be compensated by participants from higher levels. Many ideal DHSS schemes have been proposed, but they often suffer from big share sizes. Conversely, existing non-ideal DHSS schemes achieve small share sizes, yet they fail to be both secure and asymptotically ideal simultaneously. In this work, we present an explicit construct of an asymptotically ideal DHSS scheme by using a polynomial, multiple linear homogeneous recurrence relations and one-way functions. Although our scheme has computational security and many public values, it has a small share size and the dealer is required polynomial time.

\noindent\textbf{Keywords}: secret sharing, disjunctive hierarchical secret sharing, ideal secret sharing, linear homogeneous recurrence relations

\section {Introduction}
Secret sharing is a method of sharing a secret among a group of participants in which each person has a share that allows any authorized subset of participants to reconstruct the secret by pooling their shares, but any unauthorized subset of participants cannot. A secret sharing scheme is said to be perfect if no information about the secret can be known for any given an unauthorized subset. The first secret sharing scheme \cite{Shamir1979,Blakley1979} realized threshold access structures, in which each authorized subset was a subset of cardinality at least a given number. A $(t,n)$-threshold scheme consists of two phases: the share generation phase and the secret reconstruction phase. In the former, the dealer divides the secret into $n$ shares $s_i, i\in\{1,2,\ldots,n\}$, and distributes each share $s_i$ to the $i$-th participant through secure channels. In the latter, any $t$ or more participants can reconstruct the secret by pooling their shares together, while any up to $(t-1)$ participants learn nothing about the secret. Clearly, $(t,n)$-threshold schemes are perfect. The information rate is an important metric for the efficiency of a secret sharing scheme, which is defined as the ratio of the Shannon entropy of the secret to the maximum Shannon entropy of the shares. A secret sharing scheme is called ideal if it is perfect and has an information rate of one. Similarly, a scheme is termed asymptotically ideal if it is asymptotically perfect and its information rate approaches one as the secret space's cardinality becomes sufficiently large.

Threshold schemes are suitable for democratic groups, in which each participant is assigned the same degree of trust. However, for most organizations, the trust assigned to a given person is directly related to his/her position in the organizations. Disjunctive Hierarchical Secret Sharing (DHSS) schemes are suitable for this scenario. The set of all participants of a DHSS scheme are partitioned into $m$ disjoint subsets $\mathcal{P}_1,\mathcal{P}_2,\ldots,\mathcal{P}_m$. Each disjoint subset is called a level, and different levels have different degrees of trust and different thresholds. Let $t_{\ell}$ be the threshold of the $\ell$-th level for $\ell\in \{1,2,\ldots,m\}$. The secret can be reconstructed if and only if there is a number $\ell\in \{1,2,\ldots,m\}$ such that the number of participants of the first $\ell$ levels is not less than the threshold $t_{\ell}$. This shows that when the number of cooperating participants from the $\ell$-th level is $r_{\ell}$ such that $r_{\ell}<t_{\ell}$, then $(t_{\ell}-r_{\ell})$ participants can be taken from higher levels $\{1,2,\ldots,\ell-1\}$. Based on the geometric construction that was presented by Blakley \cite{Blakley1979}, Simmons \cite{Simmons} constructed the first DHSS scheme with particular threshold access structure in 1988. His scheme was not ideal and demanded (potentially exponentially) many checks when assigning identities and shares to participants. Since then, how to construct DHSS schemes has become a hot topic.

Brickell \cite{Brickell1989} constructed two ideal DHSS schemes with vector spaces over finite fields. The first one had the same drawback as Simmons' approach, requiring the dealer to check the non-singularity of possibly exponentially many matrices. The second scheme avoided this issue, and its share size was $\beta\log_2 p$ bits, where $\beta=mt_m^2$ and $p>\max_{\ell\in\{1,2,\ldots,m\}}\{|\mathcal{P}_{\ell}|\}$. Based on the Birkhoff interpolation, Tassa \cite{Tassa2007} constructed an ideal DHSS scheme with a polynomial, and its share size was $\log_2 p$ bits, where
$p>\begin{pmatrix}
   n+1 \\
   t_m
\end{pmatrix}\frac{(t_m-2)(t_m-1)}{2}+t_m$, and the dealer needed polynomial time to assign identities and shares to participants when the threshold $t_m$ was small. Note that the shares generations of the DHSS schemes \cite{Brickell1989,Tassa2007} were probabilistic.

Some other DHSS schemes were given with explicit constructions, and the dealers were required polynomial time. Ghodosi et al. \cite{Ghodosi1998} used multiple polynomials to propose an ideal DHSS scheme with an explicit construction. All those polynomials had the same constant term, which was the secret. However, the number of equations equal to the number of unknowns is not a sufficient and necessary condition for the system of equations to have a unique solution. Moreover, even if a system of equations has many solutions, one of the unknowns (the secret) may be uniquely determined. Therefore, the DHSS scheme of Ghodosi et al. \cite{Ghodosi1998} was insecure, more details can be found in appendix. Lin et al. \cite{Linchanglu2009} constructed a DHSS scheme with multiple polynomials, in which some function values of all polynomials except the first polynomial were published. Moreover, the secret was represented as a vector with two components, and these two components were taken as the constant term and the coefficient of linear term of the first polynomial, respectively. As a result, the DHSS scheme of Lin et al. \cite{Linchanglu2009} was neither ideal nor asymptotically ideal. Recently, Chen et al. \cite{Chenqi2022} gave an explicit construction of an ideal DHSS scheme using polymatroids, and reduced the algebraic number $\beta$ in the second scheme of Brickell \cite{Brickell1989} to $\frac{1}{2}\sum_{\ell=1}^{m-1}t_{\ell}(t_{\ell}-1)$.

Compared with the secret sharing schemes constructed with polynomials, Chinese Reminder Theorem (CRT)-based secret sharing schemes usually have lower computational complexity. Therefore, Harn et al. \cite{Harn-Miao2014} constructed a DHSS scheme by using the CRT over integer ring for the first, which was neither ideal nor asymptotically ideal, and many hash values were published. However, Ersoy et al. \cite{Oguzhan-Ersoy2016} pointed that the DHSS scheme of Harn et al. \cite{Harn-Miao2014} was insecure, and a new asymptotically perfect DHSS scheme was proposed with many public values. It has computational security, and its information rate was smaller than $\frac{1}{2}$. This shows that the DHSS of Ersoy et al. \cite{Oguzhan-Ersoy2016} was not asymptotically ideal. Recently, Yang et al. \cite{Yangjing2024} constructed an ideal DHSS scheme by using the CRT over polynomial ring for the first. The number of public values was smaller than that of Ersoy et al. \cite{Oguzhan-Ersoy2016}, but the DHSS scheme of Yang et al. \cite{Yangjing2024} was insecure, that is because the public values were not checked in its security analysis \cite{Hongjuarx}.

\emph{Our contributions}. By using a polynomial, multiple Linear Homogeneous Recurrence (LHR) relations and one-way functions, we propose an asymptotically ideal DHSS scheme with an explicit construction. Unlike the ideal DHSS schemes with unconditional security proposed by Brickell \cite{Brickell1989}, Tassa \cite{Tassa2007} and Chen et al.\cite{Chenqi2022}, our scheme is asymptotically ideal, computationally secure and publishes many values, but it has a smaller share size (see Table 1). Compared with the DHSS schemes of Lin et al. \cite{Linchanglu2009}, Harn et al. \cite{Harn-Miao2014}, Ersoy et al. \cite{Oguzhan-Ersoy2016} and Yang et al. \cite{Yangjing2024}, our scheme is secure and asymptotically ideal at the same time.
\begin{table}[!htb]
  \centering\scriptsize
   {\bf Table 1.}  Disjunctive hierarchical secret sharing schemes, where $t_m$ is the scheme's biggest threshold, and $|\cdot|$ is the cardinality of a set.\\
  \begin{tabular}{ccccccc}
  \hline
    \multirow{2}*{Schemes}        &Explicit                &Information               &\multirow{2}*{Ideality} &\multirow{2}*{Security}     &\multirow{2}*{Share size} \\
    ~                            &constructions            &rate                      &~                       &~                            &~\\
  \hline
  \hline
  \cite{Brickell1989}             &\multirow{2}*{No}       &\multirow{2}*{$\rho=1$}   &\multirow{2}*{Yes}      &\multirow{2}*{unconditional} &{$mt_m^2\log_2 p$ bits,}\\
     (Scheme 2)                   &~                         &~                       &~                        &~                           &$p>\max_{\ell\in\{1,2,\ldots,m\}}\{|\mathcal{P}_{\ell}|\}$\\
  \hline
  \cite{Tassa2007}(when $t_m$     &\multirow{2}*{No}         &\multirow{2}*{$\rho=1$} &\multirow{2}*{Yes}      &\multirow{2}*{unconditional } &{$\log_2 p$ bits,}\\
     was small.)                  &~                         &~                        &~    &~   &$p>\begin{pmatrix} n+1 \\ t_m\end{pmatrix}\frac{(t_m-2)(t_m-1)}{2}+t_m$\\
  \hline
  \cite{Oguzhan-Ersoy2016}       &Yes                       &$\rho<\frac{1}{2}$       &No                      &computational                &{*}\\

  \hline
  \cite{Harn-Miao2014}            &Yes                      &$\rho<1$                 &No                      &No                           &{*}\\

  \hline
  \cite{Linchanglu2009}         &Yes                      &$\rho=1$                  &No                       &unconditional                &{*}\\

  \hline
  \cite{Yangjing2024}            &Yes                      &$\rho=1$                  &Yes                     &No                            &{*}\\

  \hline
  \multirow{2}*{\cite{Chenqi2022}}    &\multirow{2}*{Yes}  &\multirow{2}*{$\rho=1$}   &\multirow{2}*{Yes}           &\multirow{2}*{unconditional}
                                                                                    &{$\frac{1}{2}\sum_{\ell=1}^{m-1}t_{\ell}(t_{\ell}-1)\log_2 p$ bits,}\\
   ~                                  &~                   &~                         &~                            &~  &$p>\max_{\ell\in\{1,2,\ldots,m\}}\{|\mathcal{P}_{\ell}|\}$\\

  \hline
  Our scheme                      &Yes                      &$\rho=1$                &Asymptotical &computational                              &{$\log_2 p$ bits, $p>n$}\\
  \hline
  \end{tabular}
 \end{table}

\emph{Related works}. This work is related to the conjunctive hierarchical secret sharing scheme of Yuan et al. \cite{YuanJiaotong2022}, which was not a DHSS scheme, and many values were published. Our scheme is a DHSS scheme that is different from Yuan et al.'s scheme. Moreover, our scheme is constructed with a polynomial, multiple LHR relations and one-way functions, while the scheme of Yuan et al. is constructed with multiple LHR relations and one-way functions. As a result, the number of public values of our scheme is less than that of the scheme of Yuan et al. \cite{YuanJiaotong2022}.

\emph{Paper organization}. After some preliminaries in Section \ref{Sec: Prelim}, the results of this work are organized as follows. In Section \ref{Sec: our scheme}, we propose an asymptotically ideal DHSS scheme, along with a comprehensive security analysis. We conclude in Section \ref{Sec: conclusion}.

\section{Preliminaries}\label{Sec: Prelim}
In this section, we will introduce basic notions and results about some secret sharing schemes and linear homogeneous recurrence relations.
\subsection{Some secret sharing schemes}

\begin{definition}[Secret sharing scheme]
Let $\mathcal{P}=\{P_1,P_2,\ldots,P_n\}$ be a group of $n$ participants. Denote by $\mathcal{S}$ and $\mathcal{S}_i$ the secret space and the share space of the participant $P_i$, respectively. A secret sharing scheme of $n$ participants includes the share generation phase and the secret reconstruction phase. In the former, for a given secret from $\mathcal{S}$, the dealer selects some random string from $\mathcal{R}$, and applies the map
                      \[\mathsf{SHARE}\colon\mathcal{S}\times\mathcal{R}\mapsto\mathcal{S}_1\times\mathcal{S}_2\times\dotsb\times\mathcal{S}_n\]
to generate shares of participants from $\mathcal{P}$. In the latter, any authorized subset $\mathcal{A}\subseteq \mathcal{P}$ can reconstruct the secret with their shares and the map
                     \[\mathsf{RECON}\colon\prod_{P_i\in\mathcal{A}}\mathcal{S}_i\mapsto\mathcal{S}.\]
Any unauthorized subset cannot reconstruct the secret. A secret sharing scheme is said to be perfect if any given unauthorized subset learns no information about the secret. The set formed by all authorized subsets in the secret sharing scheme is called the access structure.
\end{definition}

We usually consider the number $i$ as the $i$-th participant $P_i$ in this work. Let $[n]=\{1,2,\ldots,n\}$ and $[n_1, n_2]=\{n_1,n_1+1,\ldots,n_2\}$, it holds that $[n]=\mathcal{P}$. Denote by $\mathbf{X}$ and $\mathbf{Y}$ be random variables, and let $\mathsf{H}(\mathbf{X})$ be the Shannon entropy of $\mathbf{X}$. Denote by $\mathsf{H}(\mathbf{X}|\mathbf{Y})$ the conditional Shannon entropy of $\mathbf{X}$ given $\mathbf{Y}$.

\begin{definition}[Information rate, \cite{Ding2023}]\label{def: Information rate}
For a secret sharing of $n$ participants, its information rate $\rho$ is defined as the ratio of the Shannon entropy of the secret to the maximum Shannon entropy of the shares, namely,
                              \[\rho=\frac{\mathsf{H}(\mathbf{S})}{\max_{i\in [n]}^{}{\mathsf{H}(\mathbf{S}_i)}},\]
where $\mathbf{S}$ and $\mathbf{S}_i$ represent the random variables for the secret and the $i$-th participant's share, respectively. When $\mathbf{S}$ and $\mathbf{S}_i$ are random and uniformly distributed in secret space $\mathcal{S}$ and share space $\mathcal{S}_i$, respectively, it holds that
                               \[\rho=\frac{\log_2|\mathcal{S}|}{\max_{i\in [n]}^{}{\log_2|\mathcal{S}_i|}},\]
where the function $|\cdot|$ denotes the cardinality of a set.
\end{definition}

\begin{definition}[Asymptotically ideal DHSS scheme, \cite{Quisquater2002}]\label{def: Asymptotically Ideal DHSS}
Assume that a set $\mathcal{P}$ of $n$ participants is partitioned into $m$ disjoint subsets $\mathcal{P}_1, \mathcal{P}_2,\ldots, \mathcal{P}_m$, namely,
        \[\mathcal{P}=\cup_{\ell=1}^{m} \mathcal{P}_{\ell},~and~\mathcal{P}_{\ell_1}\cap\mathcal{P}_{\ell_2}=\varnothing~ for~any~ 1\leq \ell_1<\ell_2\leq m.\]
 The subset $\mathcal{P}_1$ is on the highest level of hierarchy, while $\mathcal{P}_m$ is on the least privileged level. Let $\mathcal{S}$ and $\mathcal{S}_i$ be the secret space and the share space of the $i$-th participant, respectively. Let $\mathbf{S}$ and $\mathbf{S}_i$ be the random variables representing the secret and the $i$-th participant's share, respectively. For a threshold sequence $t_1, t_2, \ldots, t_m$ such that $1\leq t_1< t_2<\cdots<t_m\leq n$, the asymptotically ideal Disjunctive Hierarchical Secret Sharing (DHSS) scheme with the given threshold sequence is a secret sharing scheme
such that the following three conditions are satisfied.
\begin{itemize}
  \item Correctness. The secret can be reconstructed by any given $\mathcal{A}\in\Gamma$, where
  \[\Gamma=\{\mathcal{A} \subseteq \mathcal{P}: \exists \ell \in [m]~such~that~|\mathcal{A}\cap(\bigcup_{w=1}^{\ell}\mathcal{P}_{w})|\geq t_{\ell}\}.\]
  \item Asymptotic perfectness. For all $\epsilon_1>0$, there is a positive integer $\sigma_1$ such that for all $\mathcal{B}\notin\Gamma$ and $|\mathcal{S}|>\sigma_1$, the loss entropy
                       \[\Delta(|\mathcal{S}|)=\mathsf{H}(\mathbf{S})-\mathsf{H}(\mathbf{S}|\mathbf{V}_{\mathcal{B}})\leq \epsilon_1,\]
   where $\mathsf{H}(\mathbf{S})\neq 0$, and $\mathbf{V}_{\mathcal{B}}$ is a random variable representing the knowledge of $\mathcal{B}$.
  \item Asymptotic maximum information rate. For all $\epsilon_2>0$, there is a positive integer $\sigma_2$ such that for all $\mathcal{B}\notin\Gamma$ and $|\mathcal{S}|>\sigma_2$, it holds that
      \[\frac{\max_{i\in [n]}^{}{\mathsf{H}(\mathbf{S}_i)}}{\mathsf{H}(\mathbf{S})}\leq 1+\epsilon_2.\]
\end{itemize}
\end{definition}

\subsection{Linear homogeneous recurrence relations}
In this subsection, we will introduce the basic notion and related results of linear recurrence recurrence relations. More details are referred to \cite{YuanJiaotong2022,Biggs1989}.

\begin{definition}[Linear homogeneous recurrence relation, \cite{Biggs1989}]\label{def: LHR}
 For given constants $a_1,a_2,\\\dots,a_{t}\in \mathbb{F}_p$ and $t$ initial values $b_0,b_1,\dots,b_{t-1}\in \mathbb{F}_p$, a Linear Homogeneous Recurrence\emph{(LHR)} relation $(u_i)_{i\geq 0}$ over $\mathbb{F}_p$ is defined by the following system of equations:
 \begin{equation*}
	\begin{cases}
		u_{0}=b_0,u_{1}=b_{1},\dots,u_{t-1}=b_{t-1}, \\
		u_{i+t}+a_{1}	u_{i+t-1}+\cdots+a_{t}	u_{i}\equiv 0\pmod p,i\geq 0.
	\end{cases}
\end{equation*}
where the positive integer $t$ is called the order of the LHR relation.
\end{definition}

\begin{definition}[Auxiliary equation,\cite{Biggs1989}]\label{def: Auxiliary equation} The auxiliary equation of the $t$-order LHR relation $(u_i)_{i\geq 0}$ in Definition \ref{def: LHR} is defined as
	\[g(x)=x^t+a_1x^{t-1}+\cdots+a_t\equiv0 \pmod p.\]
\end{definition}

\begin{theorem}[\cite{YuanJiaotong2022}]\label{Lemma: general term}
 Let $\alpha_1,\alpha_2,\dots,\alpha_m$ be distinct roots of the auxiliary equation in Definition \ref{def: Auxiliary equation} with multiplicities $t_1,t_2,\dots,t_m$, respectively. If $\sum_{\ell=1}^{m}t_{\ell}=t$, the general term for the $t$-order LHR relation $(u_i)_{i\geq 0}$ in Definition \ref{def: LHR} is
	                     \[u_i=g_1(i)\alpha_1^{i}+g_2(i)\alpha_2^{i}+\cdots+g_m(i)\alpha_m^{i}\in \mathbb{F}_p,\]
where polynomials $g_{\ell}(x)=b_{\ell,0}+b_{\ell,1}x+\cdots+b_{\ell, t_{\ell}-1}x^{t_{\ell}-1}\in \mathbb{F}_p[x], \ell\in [m]$ are determined by the $t$ initial values of $(u_i)_{i\geq 0}$. As a special case, if $\alpha_1=\alpha_2=\cdots=\alpha_m=\alpha$, the general term is
	                       \[u_i=f(i)\alpha^{i}\in \mathbb{F}_p~\mathrm{for}~f(x)=b_{1,0}+b_{1,1}x+\cdots+b_{1,t-1}x^{t-1}\in \mathbb{F}_p[x].\]
\end{theorem}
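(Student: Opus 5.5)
The plan is to use generating functions rather than a confluent Vandermonde determinant, because the generating-function route makes the role of the multiplicities transparent. I will also state the implicit hypotheses explicitly: the roots $\alpha_\ell$ lie in $\mathbb{F}_p$, and $a_t\neq 0$ so that every $\alpha_\ell\neq 0$. Moreover each multiplicity should satisfy $t_\ell\le p$; this is needed so that $1/(k!)$ exists in $\mathbb{F}_p$ for $k<t_\ell$, and it will be used in the last step.

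First, put $U(x)=\sum_{i\ge 0}u_ix^i\in\mathbb{F}_p[[x]]$ and let $\hat g(x)=x^tg(1/x)=1+a_1x+\cdots+a_tx^t$. Since $g(x)=\prod_{\ell=1}^m(x-\alpha_\ell)^{t_\ell}$ and $\sum_\ell t_\ell=t$, we get $\hat g(x)=\prod_{\ell=1}^m(1-\alpha_\ell x)^{t_\ell}$. The recurrence in Definition \ref{def: LHR} says precisely that every coefficient of $x^{i+t}$, $i\ge 0$, in $\hat g(x)U(x)$ vanishes. Hence $\hat g(x)U(x)=P(x)$ for a polynomial $P$ of degree less than $t$, whose coefficients are determined by $b_0,\dots,b_{t-1}$. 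Because $\hat g(0)=1$, this gives $U(x)=P(x)/\hat g(x)$.

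Next, the factors $(1-\alpha_\ell x)$ are pairwise coprime, since the $\alpha_\ell$ are distinct and nonzero. Partial fractions over $\mathbb{F}_p[x]$ therefore write
\[\frac{P(x)}{\hat g(x)}=\sum_{\ell=1}^m\sum_{k=0}^{t_\ell-1}\frac{c_{\ell,k}}{(1-\alpha_\ell x)^{k+1}},\]
which is valid because $\deg P<\deg\hat g$. Expanding each term uses the identity
\[(1-\alpha x)^{-(k+1)}=\sum_{i\ge0}\binom{i+k}{k}\alpha^ix^i.\]
It follows that $u_i=\sum_\ell\big(\sum_{k<t_\ell}c_{\ell,k}\binom{i+k}{k}\big)\alpha_\ell^i$.

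Finally, for fixed $k$ the quantity $\binom{i+k}{k}=(i+1)(i+2)\cdots(i+k)/k!$ is a polynomial in $i$ of degree exactly $k$ over $\mathbb{F}_p$. This is the main obstacle: the claim requires $k!$ to be invertible, i.e. $k<p$, which is guaranteed by $t_\ell\le p$. Without this hypothesis the reduction to the polynomials $g_\ell$ of degree $<t_\ell$ fails, so I would state it as a standing assumption, since the paper's scheme takes $p>n\ge t$ anyway. Collecting terms gives polynomials $g_\ell$ of degree at most $t_\ell-1$ with $u_i=\sum_\ell g_\ell(i)\alpha_\ell^i$. The coefficients $b_{\ell,j}$ are determined by the initial values because the $c_{\ell,k}$ are determined by $P$, which in turn is determined by $b_0,\dots,b_{t-1}$; the uniqueness of partial fraction decomposition gives this. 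The special case with a single root $\alpha$ of multiplicity $t$ is then just $m=1$.
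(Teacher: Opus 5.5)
Your proof is correct, and with the hypotheses you make explicit it is complete. The paper gives no proof of its own here: the statement is quoted from \cite{YuanJiaotong2022}, and the scheme only uses it for a single root $\alpha_\ell\in\mathbb{F}_p^*$ of multiplicity $t_\ell$.

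The usual textbook route differs from yours. It checks that each sequence $i^j\alpha_\ell^i$ ($j<t_\ell$) solves the recurrence, and then shows that the confluent Vandermonde matrix formed by the first $t$ terms of these sequences is nonsingular. Your generating-function argument instead gets existence and uniqueness in one step from the partial-fraction decomposition of $P/\hat g$, which holds over any field. It also isolates the one place where the characteristic matters, namely rewriting $\binom{i+k}{k}$ as a polynomial of degree $k$ in $i$. In the Vandermonde route the same hypotheses are hidden in the determinant, which carries the factors $j!$ ($j<t_\ell$) and powers of the $\alpha_\ell$.

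Your extra assumptions are not artifacts of your method; the statement as printed fails without them.
For $g(x)=(x-1)^{p+1}$, the sequence $u_i=\lfloor i/p\rfloor \bmod p$ satisfies the recurrence, because $(E-1)^{p+1}=(E^p-1)(E-1)$ for the shift operator $E$ in characteristic $p$. Yet it is not $p$-periodic, so it is not of the form $g_1(i)$.
For $g(x)=x^2$, the sequence $0,1,0,0,\ldots$ is not of the form $g_1(i)0^i$.

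Both assumptions hold in the scheme, where $\alpha_\ell\in\mathbb{F}_p^*$ and $t_\ell\le n<p$. If you want uniqueness of the $g_\ell$ themselves rather than of the $c_{\ell,k}$, note that the polynomials $\binom{x+k}{k}$, $0\le k<t_\ell$, form a basis of the polynomials of degree less than $t_\ell$.
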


\section{An asymptotically ideal DHSS scheme}\label{Sec: our scheme}
In this section, we will construct an asymptotically ideal DHSS scheme by using a polynomial, multiple LHR relations and one-way functions. Note that all the computations in this section are performed over $\mathbb{F}_p$.

\subsection{Our scheme}
Let $\mathcal{P}$ be a set of $n$ participants, and it is partitioned into $m$ disjoint subsets $\mathcal{P}_1, \mathcal{P}_2,\ldots, \mathcal{P}_m$.
Denote by $n_{\ell}=|\mathcal{P}_{\ell}|$ and $N_{\ell}=\sum_{w=1}^{\ell} n_{w}$ for $\ell\in [m]$. Let $t_1, t_2, \ldots, t_m$ be a threshold sequence such that $1\leq t_1<t_2<\cdots<t_m$ and $t_{\ell}\leq n_{\ell}$ for $\ell\in [m]$. Our scheme consists of a share generation phase and a secret reconstruction phase.

$\bullet$ \emph{Share Generation Phase}: Let $s\in\mathbb{F}_p$ be a secret, where $p$ is a big prime such that $p>n$.

    \emph{Step 1}. The dealer selects $N_{m-1}$ random values $c_{1}, c_{2},\ldots, c_{N_{m-1}}\in \mathbb{F}_p$, and selects a random polynomial
                                \[f(x)=d_{m,1}+d_{m,2}x+\cdots+d_{m,t_{m}} x^{t_{m}-1} \in \mathbb{F}_p[x]\]
       such that $s=f(0)$. The dealer computes $\{f(i)\in \mathbb{F}_p: i\in [n]\}$, and distributes each share $s_i$ to the $i$-th participant for $i\in [n]$, where
       \[s_i=\begin{cases}
		c_i,~\mathrm{if}~ i\in [N_{m-1}], \\
		f(i),~\mathrm{if}~ i\in[N_{m-1}+1, n].
	\end{cases}\]

    \emph{Step 2}. The dealer selects $m$ publicly known distinct one-way functions $h_1(\cdot),h_2(\cdot),\dots,\\h_m(\cdot)$ that map elements of $\mathbb{F}_p$ to $\mathbb{F}_p$. The dealer selects $(m-1)$ publicly known distinct elements $\alpha_1,\alpha_2,\dots,\alpha_{m-1}\in\mathbb{F}_p^*$, and takes \[(x-\alpha_\ell)^{t_{\ell}}=x^{t_{\ell}}+a_{\ell,1}x^{t_{\ell}-1}+a_{\ell,2}x^{t_{\ell}-2}+\cdots+a_{\ell, t_{\ell}}=0, \ell\in [m-1]\]
    as auxiliary equations to constructs $(m-1)$ different LHR relations as follows:
         \[\begin{cases}
		    u_{0}^{(\ell)}=h_{\ell}(s_1),u_{1}^{(\ell)}=h_{\ell}(s_2),\dots,u_{t_{\ell}-1}^{(\ell)}=h_{\ell}(s_{t_{\ell}}), \\
		    u_{i+t_{\ell}}^{(\ell)}+a_{\ell,1}u_{i+t_{\ell}-1}^{(\ell)}+\cdots+a_{\ell, t_{\ell}}u_{i}^{(\ell)}=0, i\geq 0.
	      \end{cases}\]

     \emph{Step 3}. For any $\ell\in [m-1]$, the dealer computes and publishes $r_{\ell}=s-u_{n}^{(\ell)}$, and $I_i^{(\ell)}=u_{i-1}^{(\ell)}-h_{\ell}(s_i)$ for all $i\in [t_\ell+1,N_{\ell}]$.

     \emph{Step 4}. The dealer computes and publishes $I_i^{(m)}=f(i)-h_{m}(s_i)$ for $i\in [N_{m-1}]$.

\begin{itemize}
    \item Secret \emph{Reconstruction Phase}: Assume that
      \[\mathcal{A}=\{i_{1},i_{2},\dots,i_{|\mathcal{A}|}\}\subseteq \mathcal{P}, i_{1}<i_{2}<\cdots<i_{|\mathcal{A}|},i_{|\mathcal{A}|}\in \mathcal{P}_\ell~\mathrm{and}~|\mathcal{A}|\geq t_{\ell}.\]
\end{itemize}

   \emph{Case 1}. If $\ell\in [m-1]$, participants of $\mathcal{A}$ use their shares $\{s_i: i\in \mathcal{A}\}$, publicly known values $r_\ell, \alpha_{\ell}, I_{i}^{(\ell)}$ and publicly known one-way function $h_{\ell}(\cdot)$ to reconstruct the secret
 	      \[s=r_\ell+\alpha_{\ell}^{n}\sum_{i \in \mathcal{A}}\left(\frac{u_{i-1}^{(\ell)}}{\alpha_{\ell}^{i-1}}\prod_{j\in \mathcal{A},j\neq i}\frac{n-j+1)}{i-j}\right),\]
where
           \[u_{i-1}^{(\ell)}=
            \begin{cases}
		    h_{\ell}(s_i), & \mathrm{if}~i\in \mathcal{A}\cap [t_\ell],\\
		    I_{i}^{(\ell)}+h_{\ell}(s_i), & \mathrm{if}~i\in \mathcal{A}\cap [t_{\ell}+1,N_{\ell}].
	        \end{cases}\]

  \emph{Case 2}. If $\ell=m$, participants of $\mathcal{A}$ use their shares $\{s_i: i\in \mathcal{A}\}$, publicly known values $I_{i}^{(m)}$ and publicly known one-way function $h_{m}(\cdot)$ to reconstruct the secret
            \[s=\sum_{i \in \mathcal{A}}\left(f(i)\prod_{j\in \mathcal{A},j\neq i}\frac{-j}{i-j}\right),\]
where
           \[f(i)=
            \begin{cases}
		    I_{i}^{(m)}+h_{m}(s_i), & \mathrm{if}~i\in \mathcal{A}\cap [N_{m-1}],\\
		    s_i, & \mathrm{if}~i\in \mathcal{A}\cap [N_{m-1}+1,n].
	        \end{cases}\]

\subsection{Security analysis of our scheme}
It is easy to check that the information rate of our scheme is one. Next, we will prove the correctness and asymptotic perfectness of our scheme.
\begin{theorem}[Correctness]\label{Theorem:Correctness of our scheme} Any subset
       \[\mathcal{A}\in \Gamma=\{\mathcal{A} \subseteq \mathcal{P}: \exists \ell \in [m]~such~that~|\mathcal{A}\cap(\bigcup_{w=1}^{\ell}\mathcal{P}_{\ell})|\geq t_{\ell}\}\]
can reconstruct the secret by using their shares, publicly known values and publicly known one-way functions.
\end{theorem}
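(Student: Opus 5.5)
Given $\mathcal{A}\in\Gamma$, let $\ell$ be the smallest index with $|\mathcal{A}\cap(\bigcup_{w=1}^{\ell}\mathcal{P}_w)|\geq t_\ell$. Then $\mathcal{A}':=\mathcal{A}\cap[N_\ell]$ has at least $t_\ell$ elements and all of them lie in $[N_\ell]$. I would run the reconstruction of the scheme on a $t_\ell$-element subset of $\mathcal{A}'$ (or on all of $\mathcal{A}'$; extra points do no harm). I then split according to whether $\ell\leq m-1$ or $\ell=m$, matching Case 1 and Case 2 of the reconstruction phase.

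\emph{Case $\ell=m$.} For $i\in\mathcal{A}\cap[N_{m-1}]$, the public value $I_i^{(m)}=f(i)-h_m(s_i)$ gives $f(i)=I_i^{(m)}+h_m(s_i)$, since $h_m$ is public and $s_i$ is known to participant $i$. For $i\in[N_{m-1}+1,n]$ we have $s_i=f(i)$ directly. So the participants know at least $t_m$ distinct evaluations of $f$, which has degree at most $t_m-1$, at points in $[n]$. These points are distinct in $\mathbb{F}_p$ because $p>n$. Lagrange interpolation at $0$ then gives $s=f(0)$, and this is exactly the stated formula.

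\emph{Case $\ell\leq m-1$.} First, every $i\in\mathcal{A}'$ yields $u_{i-1}^{(\ell)}$. For $i\leq t_\ell$ this is the initial value $h_\ell(s_i)$. For $t_\ell+1\leq i\leq N_\ell$ it is $I_i^{(\ell)}+h_\ell(s_i)$, by the definition in Step 3. Next, the auxiliary equation $(x-\alpha_\ell)^{t_\ell}$ has the single root $\alpha_\ell\neq 0$ of multiplicity $t_\ell$. The special case of Theorem \ref{Lemma: general term} therefore gives $u_k^{(\ell)}=g(k)\alpha_\ell^k$ for a polynomial $g$ of degree at most $t_\ell-1$. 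Hence the participants know $g(i-1)=u_{i-1}^{(\ell)}\alpha_\ell^{-(i-1)}$ at at least $t_\ell$ distinct points $i-1\in[0,N_\ell-1]$, again distinct mod $p$ since $p>n$. Lagrange interpolation at $x=n$ gives
\[g(n)=\sum_{i\in\mathcal{A}'}g(i-1)\prod_{j\neq i}\frac{n-(j-1)}{(i-1)-(j-1)}.\]
From this, $u_n^{(\ell)}=\alpha_\ell^n g(n)$ and $s=r_\ell+u_n^{(\ell)}$, which is the Case 1 formula.

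The main obstacle is an indexing mismatch. The reconstruction phase as written picks $\ell$ from the index of the \emph{largest} element of $\mathcal{A}$ and sums over all of $\mathcal{A}$. That fails if $\mathcal{A}$ contains members of levels $>\ell$, for which no $I_i^{(\ell)}$ is published. I would therefore state explicitly that reconstruction uses only $\mathcal{A}'=\mathcal{A}\cap[N_\ell]$ for the minimal qualifying $\ell$, which is legitimate by the definition of $\Gamma$. I would also verify the technical conditions carefully. Theorem \ref{Lemma: general term} needs to apply over $\mathbb{F}_p$, which requires $t_\ell<p$ so that the polynomial form of the general term is valid; this holds because $t_\ell\leq n<p$. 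The interpolation nodes must be distinct mod $p$, and $\alpha_\ell$ must be invertible, which holds since $\alpha_\ell\in\mathbb{F}_p^*$.
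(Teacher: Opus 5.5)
Your proof is correct and follows the paper's own argument. In Case~1 you recover $u^{(\ell)}_{i-1}$ from the shares and the published $I_i^{(\ell)}$, use the general term $u^{(\ell)}_k=g(k)\alpha_\ell^k$, interpolate $g$ at $x=n$ and add $r_\ell$; in Case~2 you recover $f(i)$ via $I_i^{(m)}$ and interpolate at $0$, exactly as the paper does, and your restriction to $\mathcal{A}\cap[N_\ell]$ and your checks that $p>n$ gives distinct nodes and a valid general-term formula over $\mathbb{F}_p$ simply make explicit what the paper's ``without loss of generality'' step and its use of the general-term theorem leave implicit.
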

\begin{proof}
Without loss of generality, we let
          \[\mathcal{A}=\{i_{1},i_{2},\dots,i_{|\mathcal{A}|}\}\subseteq \mathcal{P}, i_{1}<i_{2}<\cdots<i_{{|\mathcal{A}|}},i_{|\mathcal{A}|}\in \mathcal{P}_\ell~\mathrm{and}~|\mathcal{A}|\geq t_{\ell}.\]
The proof is divided into two cases according to the range of $\ell$.

\emph{Case 1}. If $\ell\in [m-1]$, participants of $\mathcal{A}$ use their shares $\{s_i: i\in \mathcal{A}\}$, publicly known values $I_{i}^{(\ell)}$ and publicly known one-way function $h_{\ell}(\cdot)$ to get $|\mathcal{A}|$ terms $\{u_{i-1}^{(\ell)}: i\in \mathcal{A}\}$ of the $\ell$-th LHR relation $(u_i^{(\ell)})_{i\geq 0}$, where
 	      \[u_{i-1}^{(\ell)}=
            \begin{cases}
		    h_{\ell}(s_i), & \mathrm{if}~i\in \mathcal{A}\cap [t_\ell],\\
		    I_{i}^{(\ell)}+h_{\ell}(s_i), & \mathrm{if}~i\in \mathcal{A}\cap [t_{\ell}+1,N_{\ell}].
	        \end{cases}\]
Since the auxiliary equation of $(u_i^{(\ell)})_{i\geq 0}$ is
                    \[(x-\alpha_\ell)^{t_{\ell}}=x^{t_{\ell}}+a_{\ell,1}x^{t_{\ell}-1}+a_{\ell,2}x^{t_{\ell}-2}+\cdots+a_{\ell, t_{\ell}}=0,\]
from Lemma \ref{Lemma: general term} we have that the general term of $(u_i^{(\ell)})_{i\geq 0}$ is $u_i^{(\ell)}=f^{(\ell)}(i)\alpha_{\ell}^{i}$, where the degree of $f_{\ell}(x)\in \mathbb{F}_p[x]$ is less than $t_{\ell}$. This shows that participants of $\mathcal{A}$ get $\{f_{\ell}(i-1)=\frac{u_{i-1}^{(\ell)}}{\alpha_{\ell}^{i-1}}: i\in \mathcal{A}\}$ of cardinality $|\mathcal{A}|\geq t_{\ell}$. From Lagrange interpolating formula, it holds that
    \begin{displaymath}
       \begin{aligned}
         f_{\ell}(x)=&\sum_{i \in \mathcal{A}}\left(f_{\ell}(i-1)\prod_{j\in \mathcal{A},j\neq i}\frac{x-j+1)}{i-j}\right)\\
                      =&\sum_{i \in \mathcal{A}}\left(\frac{u_{i-1}^{(\ell)}}{\alpha_{\ell}^{i-1}}\prod_{j\in \mathcal{A},j\neq i}\frac{x-j+1)}{i-j}\right).
     \end{aligned}
    \end{displaymath}
As a result,
\begin{displaymath}
    \begin{aligned}
           s=&r_\ell+u_n^{(\ell)}\\
            =&r_\ell+f_{\ell}(n){\alpha_{\ell}^{n}}\\
            =&r_\ell+\alpha_{\ell}^{n}\sum_{i \in \mathcal{A}}\left(\frac{u_{i-1}^{(\ell)}}{\alpha_{\ell}^{i-1}}\prod_{j\in \mathcal{A},j\neq i}\frac{n-j+1)}{i-j}\right).
    \end{aligned}
 \end{displaymath}

\emph{Case 2}. If $\ell=m$, participants of $\mathcal{A}$ use their shares $\{s_i: i\in \mathcal{A}\}$, publicly known values $I_{i}^{(m)}$ and publicly known one-way function $h_{m}(\cdot)$ to get $\{f(i): i\in \mathcal{A}\}$ of cardinality $|\mathcal{A}|\geq t_{m}$, where
        \[f(i)=
            \begin{cases}
		    I_{i}^{(m)}+h_{m}(s_i), & \mathrm{if}~i\in \mathcal{A}\cap [N_{m-1}],\\
		    s_i, & \mathrm{if}~i\in \mathcal{A}\cap [N_{m-1}+1,n].
	        \end{cases}\]
Since the degree of $f(x)$ is less than $t_m$, from Lagrange interpolating formula we get that
            \[s=\sum_{i \in \mathcal{A}}\left(f(i)\prod_{j\in \mathcal{A},j\neq i}\frac{-j}{i-j}\right).\]
\end{proof}

Now we will prove the \emph{asymptotic perfectness} of our scheme. Assume that the subset $\mathcal{B}\subset \mathcal{P}$ is an unauthorized subset, which shows that
            \[\mathcal{B}\notin \Gamma=\{\mathcal{A} \subseteq \mathcal{P}: \exists \ell \in [m]~such~that~|\mathcal{A}\cap(\bigcup_{w=1}^{\ell}\mathcal{P}_{w})|\geq t_{\ell}\}.\]
Participants of $\mathcal{B}$ know the upper bound of the degree of $f(x)\in \mathbb{F}_p[x]$, all publicly known one-way functions and values, as well as their shares. Namely, the knowledge of $\mathcal{B}$ is the following seven conditions denoted by $\mathcal{V}_{\mathcal{B}}$.
\begin{itemize}
  \item[(i)] $\deg(f(x))<t_m$, and the equation $f(0)=s$.
  \item[(ii)] One-way functions $h_1(\cdot),h_2(\cdot),\dots,h_m(\cdot)$ are publicly known.
  \item[(iii)] Distinct elements $\alpha_1,\alpha_2,\dots,\alpha_{m-1}\in\mathbb{F}_p^*$ are publicly known, and \[(x-\alpha_\ell)^{t_{\ell}}=x^{t_{\ell}}+a_{\ell,1}x^{t_{\ell}-1}+a_{\ell,2}x^{t_{\ell}-2}+\cdots+a_{\ell, t_{\ell}}=0, \ell\in [m-1]\]
    are auxiliary equations of $(m-1)$ different LHR relations, respectively.

  \item[(iv)] The values $r_{\ell}=s-u_{n}^{(\ell)}, \ell\in [m-1]$ are publicly known.

  \item[(v)] For any $i\in \mathcal{B}\cap [N_{m-1}]$, it holds that $f(i)=I_i^{(m)}+h_{m}(s_i)$, and there is an integer $\ell_1\in [m-1]$ such that $i\in \mathcal{P}_{\ell_1}$. For all $\ell\in [\ell_1,m-1]$, it has that
            \[u_{i-1}^{(\ell)}=
            \begin{cases}
		    h_{\ell}(s_i), & \mathrm{if}~i\in \mathcal{B}\cap [t_{\ell}],\\
		    I_{i}^{(\ell)}+h_{\ell}(s_i), & \mathrm{if}~i\in \mathcal{B}\cap [t_{\ell}+1,N_{\ell}].
	        \end{cases}\]
  \item[(vi)] For $i\in \mathcal{B}\cap [N_{m-1}+1, N_m]$, it holds that $f(i)=s_i$.
  \item[(vii)] For any $i\in [N_{m-1}]$ and $i\notin \mathcal{B}$, there is an integer $\ell_2\in [m]$ such that $i\in \mathcal{P}_{\ell_2}$. There is a value $\widetilde{s}_i\in \mathbb{F}_p$ such that $h_{m}(\widetilde{s}_i)=f(i)-I_i^{(m)}$, and for all $\ell\in [\ell_2,m-1]$, it has that
            \[ h_{\ell}(\widetilde{s}_i)=
            \begin{cases}
		    u_{i-1}^{(\ell)}, &\mathrm{if}~i\notin \mathcal{B}~\mathrm{and}~i\in[t_{\ell}],\\
		    u_{i-1}^{(\ell)}-I_{i}^{(\ell)}, & \mathrm{if}~i\notin \mathcal{B}~\mathrm{and}~ i\in [t_{\ell}+1,N_{\ell}].
	        \end{cases}\]
   \end{itemize}
Note that the conditions (v) and (vii) are used to make $\{(u_i^{(\ell)})_{i\geq 0}: \ell\in [m-1]\}, f(x)$ satisfy publicly known $I_i^{(\ell)}$.

\begin{lemma}\label{le: loss entropy}
Denote by $\mathcal{V}^\prime_{\mathcal{B}}$ the conditions (i) to (vi). For all $\epsilon_1>0$, there is a positive integer $\sigma_1$ such that for all $\mathcal{B}\notin\Gamma$ and $|\mathcal{S}|=p>\sigma_1$, it holds that
                               \[0<\mathsf{H}(\mathbf{S}|\mathbf{V}^\prime_{\mathcal{B}})-\mathsf{H}(\mathbf{S}|\mathbf{V}_{\mathcal{B}})<\epsilon_1,\]
where $\mathbf{V}^\prime_{\mathcal{B}}$ and $\mathbf{V}_{\mathcal{B}}$ represent random variables of $\mathcal{V}^\prime_{\mathcal{B}}$ and $\mathcal{V}_{\mathcal{B}}$, respectively.
\end{lemma}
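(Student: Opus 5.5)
This lemma is a heuristic modelling statement rather than a standard theorem, so the plan works inside the model that seems intended. The one-way functions are treated as random functions, and condition (vii) is viewed as a collection of polynomially many ``consistency constraints'' on the unknown values $\widetilde{s}_i$. Each constraint has the form $h(\widetilde{s}_i)=c$ for a value $c$ that is determined by a candidate secret together with the knowledge $\mathcal{V}^\prime_{\mathcal{B}}$.

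The approach is to compare the posterior distribution of $\mathbf{S}$ under $\mathcal{V}^\prime_{\mathcal{B}}$ with the posterior under $\mathcal{V}_{\mathcal{B}}=\mathcal{V}^\prime_{\mathcal{B}}\cup\{(\mathrm{vii})\}$, and to show that adding (vii) removes only a vanishing fraction of candidate secrets.

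\emph{Step 1: under $\mathcal{V}^\prime_{\mathcal{B}}$ the secret is uniform.} Since $\mathcal{B}\notin\Gamma$, we have $|\mathcal{B}\cap(\bigcup_{w\le \ell}\mathcal{P}_w)|<t_\ell$ for every $\ell$. For each $\ell\in[m-1]$, the members of $\mathcal{B}$ therefore know fewer than $t_\ell$ terms of the $\ell$-th LHR relation. By Theorem \ref{Lemma: general term}, $u_i^{(\ell)}=f_\ell(i)\alpha_\ell^i$ with $\deg f_\ell<t_\ell$, so $u_n^{(\ell)}$ is uniform given those terms, and $r_\ell=s-u_n^{(\ell)}$ reveals nothing about $s$. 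Similarly, $\mathcal{B}$ knows fewer than $t_m$ values of $f$ (from (v) and (vi)), so $f(0)=s$ is free. For each candidate $s'\in\mathbb{F}_p$ there is thus an equal number of completions, namely polynomials $f$ and sequences $u^{(\ell)}$ consistent with (i)--(vi). This gives $\mathsf{H}(\mathbf{S}|\mathbf{V}^\prime_{\mathcal{B}})=\log_2 p$. Conditioning cannot increase entropy, so the difference is nonnegative; strict positivity will come from Step 2, which shows that some candidate becomes less likely.

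\emph{Step 2: count the candidates surviving (vii).} Fix a candidate secret $s'$ and its completion. Condition (vii) asks, for each $i\in[N_{m-1}]\setminus\mathcal{B}$, whether some $\widetilde{s}_i$ satisfies $h_m(\widetilde{s}_i)=f(i)-I_i^{(m)}$ and simultaneously $h_\ell(\widetilde{s}_i)=$ (prescribed value) for all $\ell\in[\ell_2,m-1]$. Modelling $(h_{\ell_2},\dots,h_m)$ as independent random functions, a fixed tuple of $k\ge 2$ target values is hit by some $\widetilde{s}_i$ with probability about $1-(1-p^{-k})^{p}\approx p^{1-k}$. The true tuple is always hit, so the true secret always survives. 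Every other candidate survives with probability roughly $\prod_i p^{1-k_i}$. Summing over the $p$ candidates and their completions, the expected number of surviving false candidates is estimated, and one must check how it compares to $1$.

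\emph{Step 3: convert to entropy.} Let $q(s')$ denote the conditional law of $\mathbf{S}$ given $\mathcal{V}_{\mathcal{B}}$. Bound $\log_2 p-\mathsf{H}(\mathbf{S}|\mathbf{V}_{\mathcal{B}})$ by the relative entropy of $q$ from the uniform law, and show this is $O(\text{fraction of eliminated candidates})$. That fraction must tend to $0$ as $p\to\infty$. Choosing $\sigma_1$ accordingly then gives the bound $\epsilon_1$ uniformly over $\mathcal{B}$, since there are only finitely many $\mathcal{B}$ and $n,m$ are fixed.

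\emph{Main obstacle.} The difficulty is Step 2: getting the direction of the estimate right. With several independent random constraints per outsider, a naive count suggests that (vii) eliminates most false candidates rather than few. The plan therefore has to argue that the completion freedom is large enough to compensate: the unknown coefficients of $f$ and of each $f_\ell$ give, for each candidate $s'$, about $p^{\,\text{dof}}$ completions, where $\text{dof}$ is their number of degrees of freedom. This must be balanced against the $p^{-\sum(k_i-1)}$ penalty, so that each $s'$ retains asymptotically the same number of consistent completions. I would first attempt this balance by a second-moment (Chebyshev) argument on the number of consistent completions per $s'$. If the degrees of freedom are insufficient, the lemma as stated cannot hold in this model, and the argument would need an additional modelling assumption on the one-way functions, namely that they reveal no information.
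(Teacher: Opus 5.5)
Your Step 1 is just Lemma~\ref{le: i-v-nosecret} and is fine. The plan breaks down exactly where you flag it, and the answer to your open question (do the degrees of freedom suffice?) is no. Step 2 cannot be completed, because the lemma is false as a statement about Shannon entropy.

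Take the worst case from the proof of Lemma~\ref{le: i-v-nosecret}, namely $|\mathcal{B}\cap[N_\ell]|=t_\ell-1$ for all $\ell\in[m]$. That proof itself shows each candidate $\widetilde{s}$ has a \emph{unique} completion $\widetilde{f},\{(\widetilde{u}^{(\ell)}_i)_{i\geq0}\}$. So there are no spare degrees of freedom to balance against the penalty. There is also always at least one outsider $i\in[N_{m-1}]\setminus\mathcal{B}$, since $|\mathcal{B}\cap[N_{m-1}]|<t_{m-1}\le n_{m-1}$, and each outsider has $k_i\ge2$. Its $h_m$-target $\widetilde{f}(i)-I_i^{(m)}$ moves with $\widetilde{s}$, with coefficient $\prod_{j\in\mathcal{B}}\frac{i-j}{-j}\neq0$, so every wrong candidate is genuinely tested. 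Your own estimate then gives survival probability about $\prod_i p^{1-k_i}\le p^{-1}$ per wrong candidate. That means $O(1)$ surviving secrets in expectation, $\mathsf{H}(\mathbf{S}|\mathbf{V}_{\mathcal{B}})=O(1)$, and a difference $\mathsf{H}(\mathbf{S}|\mathbf{V}^\prime_{\mathcal{B}})-\mathsf{H}(\mathbf{S}|\mathbf{V}_{\mathcal{B}})$ of order $\log_2 p$. No second-moment argument can reverse this.

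The smallest instance makes this exact. Take $m=2$, $t_1=1$, $t_2=2$, $n_1=1$, $n_2=2$, and the unauthorized set $\mathcal{B}=\{2\}$. The public values are $r_1=s-\alpha_1^3h_1(c_1)$ and $I_1^{(2)}=f(1)-h_2(c_1)$. Participant $2$ holds $f(2)$, and $f(1)=(s+f(2))/2$. A candidate $s'$ is consistent with $\mathcal{V}_{\mathcal{B}}$ if and only if $s'=r_1+\alpha_1^3h_1(c)$ for some $c$ with $h_2(c)=(s'+f(2))/2-I_1^{(2)}$.

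Model $h_1,h_2$ as independent random functions. For each $c\neq c_1$, the value $h_2(c)$ is independent of $(s,f,c_1,h_1,h_2(c_1))$, which determine $r_1$, $I_1^{(2)}$, $f(2)$ and $r_1+\alpha_1^3h_1(c)$. So such a $c$ passes with probability $1/p$, and the consistent set has expected size below $2$. By Jensen, $\mathsf{H}(\mathbf{S}|\mathbf{V}_{\mathcal{B}})<1$, while $\mathsf{H}(\mathbf{S}|\mathbf{V}^\prime_{\mathcal{B}})=\log_2 p$. An unbounded adversary simply enumerates $c\in\mathbb{F}_p$.

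The paper's proof does not close this gap. It is the single assertion that (vii) eliminates a candidate only ``with a negligible probability'', which is the reverse of what this count shows. One-wayness restricts efficient adversaries but has no effect on conditional entropy. The extra modelling assumption you mention at the end would turn the lemma into a computational indistinguishability claim. That is a different statement from the one given, not a proof of it.
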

\begin{proof}
Since the shares $\{s_i: i\in [N_{m-1}],i\notin \mathcal{B}\}$ are randomly selected by the dealer and functions $\{h_{\ell}(\cdot): \ell\in [m]\}$ are distinct one-way functions, then the condition (vii) can eliminate a group of $\{(u_i^{(\ell)})_{i\geq 0}: \ell\in [m-1]\}, f(x)$ with a negligible probability when $|\mathcal{S}|=p$ is big enough. This shows that $\mathsf{H}(\mathbf{S}|\mathbf{V}^\prime_{\mathcal{B}})-\mathsf{H}(\mathbf{S}|\mathbf{V}_{\mathcal{B}})$ is negligible when $p$ is big enough. This gives the proof.
\end{proof}

\begin{lemma}\label{le: i-v-nosecret}
$\mathcal{V}^\prime_{\mathcal{B}}$ contains no information about the secret, namely, $\mathsf{H}(\mathbf{S}|\mathbf{V}^\prime_{\mathcal{B}})=\mathsf{H}(\mathbf{S})$.
\end{lemma}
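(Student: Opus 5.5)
We will show that for every candidate secret $\tilde s\in\mathbb{F}_p$ the number of dealer choices (values $\tilde c_i$ for $i\in[N_{m-1}]\setminus\mathcal{B}$ and coefficients of $f$) consistent with $\mathcal{V}'_{\mathcal{B}}$ and $s=\tilde s$ is the same. Then the posterior of $\mathbf{S}$ given $\mathbf{V}'_{\mathcal{B}}$ equals the uniform prior, and so $\mathsf{H}(\mathbf{S}|\mathbf{V}'_{\mathcal{B}})=\mathsf{H}(\mathbf{S})$. Conditions (i)--(vi) constrain only three things: $f$ through its values at $0$ and at $i\in\mathcal{B}$; each sequence $u^{(\ell)}$ through its terms $u^{(\ell)}_{i-1}$ for $i\in\mathcal{B}\cap[N_\ell]$; and the link $u^{(\ell)}_n=s-r_\ell$. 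The outsiders' shares $\tilde c_i$ enter only through the initial values $h_\ell(\tilde c_i)$ for $i\le t_\ell$. Condition (vii) is deliberately excluded.

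\emph{Step 1 (the polynomial).} Since $\mathcal{B}\notin\Gamma$, taking $\ell=m$ gives $|\mathcal{B}|\le t_m-1$. Condition (i) together with (v)--(vi) prescribes $f(0)=\tilde s$ and $f(i)$ for at most $t_m-1$ further distinct points $i\in\mathcal{B}$. These points are nonzero and distinct in $\mathbb{F}_p$ because $p>n$. By Lagrange interpolation, for every $\tilde s$ there are exactly $p^{\,t_m-1-|\mathcal{B}|}$ polynomials of degree $<t_m$ satisfying these conditions.

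\emph{Step 2 (each LHR relation).} Fix $\ell\in[m-1]$. By Theorem \ref{Lemma: general term}, $u^{(\ell)}_i=f_\ell(i)\alpha_\ell^i$ with $\deg f_\ell<t_\ell$. Conditions (iv)--(v) prescribe $f_\ell(n)=(\tilde s-r_\ell)\alpha_\ell^{-n}$ and $f_\ell(i-1)$ for $i\in\mathcal{B}\cap[N_\ell]$. Since $\mathcal{B}\notin\Gamma$, we have $|\mathcal{B}\cap[N_\ell]|\le t_\ell-1$. The points $i-1\in[0,N_\ell-1]$ and $n$ are distinct modulo $p>n$. So at most $t_\ell$ values of a polynomial of degree $<t_\ell$ are prescribed, and the number of admissible $f_\ell$ is $p^{\,t_\ell-1-|\mathcal{B}\cap[N_\ell]|}$, independent of $\tilde s$. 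The sequence is determined by its first $t_\ell$ terms $h_\ell(c_1),\dots,h_\ell(c_{t_\ell})$. Those with $i\in\mathcal{B}$ are already fixed, and the remaining ones, $h_\ell(\tilde c_i)$ for $i\in[t_\ell]\setminus\mathcal{B}$, are free unknowns in $\mathcal{V}'_{\mathcal{B}}$. Indeed, (vii) is exactly the condition we dropped, so these terms are unconstrained elements of $\mathbb{F}_p$ from the adversary's viewpoint. The public values $I^{(\ell)}_i$ for $i\notin\mathcal{B}$ are likewise tied only through (vii) and impose nothing.

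\emph{Step 3 (combine).} The constraints on $f$ and on the various $f_\ell$ involve disjoint sets of unknowns, coupled only through $\tilde s$. The total count is therefore the product of the counts above, and this product does not depend on $\tilde s$. Hence every $\tilde s$ is equally likely given $\mathcal{V}'_{\mathcal{B}}$.

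The main obstacle is Step 2. We must argue that the initial values involving outsiders can be treated as free variables even though they are images $h_\ell(\tilde c_i)$ of one-way functions, which need not be surjective. The plan is to model the unknowns in $\mathcal{V}'_{\mathcal{B}}$ as the sequence values themselves rather than the preimages, so that the preimage constraint appears only in (vii). This is consistent with the decomposition chosen in Lemma \ref{le: loss entropy}, and it makes the interpolation count legitimate.
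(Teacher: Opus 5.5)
Your proposal is correct and follows the paper's argument. For each candidate $\tilde s$ you interpolate $f$ through $0$ and the points of $\mathcal{B}$, and each $f_\ell$ through $n$ and the points $i-1$ with $i\in\mathcal{B}\cap[N_\ell]$, treating the outsiders' sequence values as free unknowns and leaving all preimage consistency to condition (vii); this is exactly the paper's implicit model. So, as you note at the end, what you count are the tuples $(f,f_1,\dots,f_{m-1})$, not the $\tilde c_i$ of your opening sentence: counting the $\tilde c_i$ would give a number depending on $\tilde s$, since one $\tilde c_i$ feeds several non-bijective $h_\ell$.

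The only difference is scope. You count $p^{t_m-1-|\mathcal{B}|}\prod_{\ell=1}^{m-1}p^{t_\ell-1-|\mathcal{B}\cap[N_\ell]|}$ solutions for an arbitrary unauthorized $\mathcal{B}$, whereas the paper proves uniqueness only in the ``worst case'' $|\mathcal{B}\cap[N_\ell]|=t_\ell-1$ for all $\ell$. That case is not without loss of generality when $t_1\ge 2$: a set with no member of $\mathcal{P}_1$ and $t_2-1$ members of $\mathcal{P}_2$ is unauthorized but lies in no such set, so your general count is slightly stronger.
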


\begin{proof}
Consider the worst case, namely,
            \[|\mathcal{B}\cap(\bigcup_{w=1}^{\ell}\mathcal{P}_{w})|=|\mathcal{B}\cap [N_{\ell}]|= t_{\ell}-1~\mathrm{for~all}~\ell\in [m],\]
and denote by $\mathcal{B}=\{i_{1},i_{2},\dots,i_{|\mathcal{B}|}\}\subset \mathcal{P}, i_{1}<i_{2}<\cdots<i_{{|\mathcal{B}|}}$.
It is easy to check that $|\mathcal{B}|=|\mathcal{B}\cap(\bigcup_{w=1}^{m}\mathcal{P}_{w})|=t_{m}-1$ and $i_{|\mathcal{B}|}\in \mathcal{P}_m$. For any secret $\widetilde{s}\in \mathbb{F}_p$, we only need to prove that there is a unique set of LHR relations $\{(\widetilde{u}_i^{(\ell)})_{i\geq 0}: \ell\in [m-1]\}$ and a unique polynomial $\widetilde{f}(x)\in \mathbb{F}_p[x]$ of degree less than $t_m$ such that the knowledge $\mathcal{V}^\prime_{\mathcal{B}}$ is still correct by replacing $s, \{(u_i^{(\ell)})_{i\geq 0}: \ell\in [m-1]\}, f(x)$ with $\widetilde{s}, \{(\widetilde{u}_i^{(\ell)})_{i\geq 0}: \ell\in [m-1]\}, \widetilde{f}(x)$, respectively. Consequently, $\mathcal{V}^\prime_{\mathcal{B}}$ contains no information about the secret. The proof is classified into three parts.

First, we will prove the existence of the unique set of LHR relations $\{(\widetilde{u}_i^{(\ell)})_{i\geq 0}: \ell\in [m-1]\}$. For any secret $\widetilde{s}\in \mathbb{F}_p$ and $\ell\in [m-1]$, participants of $\mathcal{B}$ use their shares $\{s_i: i\in \mathcal{B}\cap [N_{\ell}]\}$, publicly known values $I_{i}^{(\ell)}, r_{\ell}$ and publicly known one-way function $h_{\ell}(\cdot)$ to define $t_{\ell}$ terms $\{\widetilde{u}_{i-1}^{(\ell)}: i\in (\mathcal{B}\cap [N_{\ell}])\cup \{n+1\}\}$ as
          \[\widetilde{u}_{i-1}^{(\ell)}=
           \begin{cases}
             u_{i-1}^{(\ell)}, & \mathrm{if}~i\in \mathcal{B}\cap [N_{\ell}],\\
            \widetilde{s}-r_{\ell}, & \mathrm{if}~i=n+1.\\
           \end{cases}~~~~~~~~~~~~~~~~~~~~~~\]
          \begin{equation}\label{Equation: construction of LHR}
           =\begin{cases}
		    h_{\ell}(s_i), & \mathrm{if}~i\in \mathcal{B}\cap [t_\ell],\\
		    I_{i}^{(\ell)}+h_{\ell}(s_i), & \mathrm{if}~i\in \mathcal{B}\cap [t_{\ell}+1,N_{\ell}],\\
            \widetilde{s}-r_{\ell}, & \mathrm{if}~i=n+1.
	        \end{cases}
          \end{equation}
The auxiliary equation of $(u_i^{(\ell)})_{i\geq 0}$, i.e.,
                    \[(x-\alpha_\ell)^{t_{\ell}}=x^{t_{\ell}}+a_{\ell,1}x^{t_{\ell}-1}+a_{\ell,2}x^{t_{\ell}-2}+\cdots+a_{\ell, t_{\ell}}=0,\]
is seen as the auxiliary equation of $(\widetilde{u}_i^{(\ell)})_{i\geq 0}$. By a similar discussion as the Case 1 in the proof of Theorem \ref{Theorem:Correctness of our scheme}, the LHR relation $(\widetilde{u}_i^{(\ell)})_{i\geq 0}$ is unique, and its general term is $\widetilde{u}_i^{(\ell)}=\widetilde{f}_{\ell}(i)\alpha_{\ell}^{i}$, where
                    \[\widetilde{f}_{\ell}(x)=\sum_{i \in \mathcal{B}\cup\{n+1\}}\left(\frac{u_{i-1}^{(\ell)}}{\alpha_{\ell}^{i-1}}\prod_{j\in \mathcal{B}\cup\{n+1\},j\neq i}\frac{x-j+1)}{i-j}\right).\]

Second,  we will prove the existence of the unique polynomial $\widetilde{f}(x)\in \mathbb{F}_p[x]$. For the secret $\widetilde{s}\in \mathbb{F}_p$, participants of $\mathcal{B}$ use their shares $\{s_i: i\in \mathcal{B}\}$, publicly known values $I_{i}^{(m)}$ and publicly known one-way function $h_{m}(\cdot)$ to define $\{\widetilde{f}(i): i\in \mathcal{B}\cup\{0\}\}$ of cardinality $|\mathcal{B}\cup\{0\}|=t_{m}$, where
        \[\widetilde{f}(i)=\begin{cases}
		    f(i), & \mathrm{if}~i\in \mathcal{B},\\
		    \widetilde{s}, & \mathrm{if}~i=0.\\
	        \end{cases}~~~~~~~~~~~~~~~~~~~~~~~~~~~~~~~~~\]
        \begin{equation}\label{Equation:construction of f(x)}
           =\begin{cases}
		    I_{i}^{(m)}+h_{m}(s_i), & \mathrm{if}~i\in \mathcal{B}\cap [N_{m-1}],\\
		    s_i, & \mathrm{if}~i\in \mathcal{B}\cap [N_{m-1}+1,n],\\
            \widetilde{s}, & \mathrm{if}~i=0.
	        \end{cases}
          \end{equation}
From Lagrange interpolating formula, we get a unique
            \begin{equation}\label{Equation: degree of f(x)}
            \widetilde{f}(x)=\sum_{i \in \mathcal{B}\cup\{0\}}\left(f(i)\prod_{j\in \mathcal{B}\cup\{0\},j\neq i}\frac{x-j}{i-j}\right),
            \end{equation}
where the degree of $\widetilde{f}(x)$ is less than $|\mathcal{B}\cup\{0\}|=t_{m}$, and $\widetilde{f}(0)=\widetilde{s}$.

Third, we will prove that the knowledge $\mathcal{V}^\prime_{\mathcal{B}}$ is still correct by replacing $s, \{(u_i^{(\ell)})_{i\geq 0}: \ell\in [m-1]\}, f(x)$ with $\widetilde{s}, \{(\widetilde{u}_i^{(\ell)})_{i\geq 0}: \ell\in [m-1]\}, \widetilde{f}(x)$, respectively.
\begin{itemize}
  \item From Equations (\ref{Equation:construction of f(x)}) and (\ref{Equation: degree of f(x)}), it holds that $\deg(\widetilde{f}(x))<t_m$, and $\widetilde{f}(0)=\widetilde{s}$.

  \item  The auxiliary equations of $\{(\widetilde{u}_i^{(\ell)})_{i\geq 0}: \ell\in [m-1]\}$ are
  \[(x-\alpha_\ell)^{t_{\ell}}=x^{t_{\ell}}+a_{\ell,1}x^{t_{\ell}-1}+a_{\ell,2}x^{t_{\ell}-2}+\cdots+a_{\ell, t_{\ell}}=0, \ell\in [m-1],\]
    respectively.

  \item From the constructions of $\{(\widetilde{u}_i^{(\ell)})_{i\geq 0}: \ell\in [m-1]\}$ in Equation (\ref{Equation: construction of LHR}), it holds that $r_{\ell}=\widetilde{s}-\widetilde{u}_{n}^{(\ell)}$ for all $\ell\in [m-1]$.

  \item Let $i\in \mathcal{B}\cap [N_{m-1}]$. From the construction of $\widetilde{f}(x)$ in Equation (\ref{Equation:construction of f(x)}), it gets that $\widetilde{f}(i)=I_i^{(m)}+h_{m}(s_i)$. There is an integer $\ell_1\in [m-1]$ such that $i\in \mathcal{P}_{\ell_1}$. For all $\ell\in [\ell_1,m-1]$, from the constructions of $\{(\widetilde{u}_i^{(\ell)})_{i\geq 0}: \ell\in [m-1]\}$ in Equation (\ref{Equation: construction of LHR}), it has that
            \[\widetilde{u}_{i-1}^{(\ell)}=
            \begin{cases}
		    h_{\ell}(s_i), & \mathrm{if}~i\in \mathcal{B}\cap [t_{\ell}],\\
		    I_{i}^{(\ell)}+h_{\ell}(s_i), & \mathrm{if}~i\in \mathcal{B}\cap [t_{\ell}+1,N_{\ell}].
	        \end{cases}\]
  \item From the construction of $\widetilde{f}(x)$ in Equation (\ref{Equation:construction of f(x)}), it holds that $\widetilde{f}(i)=s_i$ for $i\in \mathcal{B}\cap [N_{m-1}+1, N_m]$.
     \end{itemize}
Therefore, $\mathcal{V}^\prime_{\mathcal{B}}$ contains no information about the secret, namely, $\mathsf{H}(\mathbf{S}|\mathbf{V}^\prime_{\mathcal{B}})=\mathsf{H}(\mathbf{S})$.
  \end{proof}

\begin{theorem}[Asymptotic perfectness]\label{Th:Asymptotic perfectness}
Our scheme is asymptotically perfect.
\end{theorem}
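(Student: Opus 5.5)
The plan is to combine Lemma \ref{le: loss entropy} and Lemma \ref{le: i-v-nosecret} directly against the asymptotic perfectness condition of Definition \ref{def: Asymptotically Ideal DHSS}. Fix an arbitrary $\epsilon_1>0$ and an arbitrary unauthorized subset $\mathcal{B}\notin\Gamma$. The loss entropy to bound is
\[\Delta(|\mathcal{S}|)=\mathsf{H}(\mathbf{S})-\mathsf{H}(\mathbf{S}|\mathbf{V}_{\mathcal{B}}).\]
Here the knowledge $\mathcal{V}_{\mathcal{B}}$ is given by conditions (i)--(vii), and $\mathcal{V}^\prime_{\mathcal{B}}$ is given by (i)--(vi). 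We insert the intermediate quantity $\mathsf{H}(\mathbf{S}|\mathbf{V}^\prime_{\mathcal{B}})$ and write
\[\Delta(|\mathcal{S}|)=\bigl(\mathsf{H}(\mathbf{S})-\mathsf{H}(\mathbf{S}|\mathbf{V}^\prime_{\mathcal{B}})\bigr)+\bigl(\mathsf{H}(\mathbf{S}|\mathbf{V}^\prime_{\mathcal{B}})-\mathsf{H}(\mathbf{S}|\mathbf{V}_{\mathcal{B}})\bigr).\]

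The first bracket vanishes by Lemma \ref{le: i-v-nosecret}. That lemma shows that for every candidate secret $\widetilde{s}\in\mathbb{F}_p$ there is exactly one compatible tuple $\{(\widetilde{u}_i^{(\ell)})_{i\ge 0}\},\widetilde{f}(x)$, so $\mathsf{H}(\mathbf{S}|\mathbf{V}^\prime_{\mathcal{B}})=\mathsf{H}(\mathbf{S})$. That lemma argues only in the worst case $|\mathcal{B}\cap[N_\ell]|=t_\ell-1$ for all $\ell$. I would add a short remark that every other unauthorized $\mathcal{B}$ can be enlarged to such a maximal unauthorized set $\mathcal{B}^\ast\supseteq\mathcal{B}$. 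This enlargement is possible because $t_\ell\le n_\ell$ and the thresholds are strictly increasing. Conditioning on less information cannot decrease entropy, so
\[\mathsf{H}(\mathbf{S})\ge\mathsf{H}(\mathbf{S}|\mathbf{V}^\prime_{\mathcal{B}})\ge\mathsf{H}(\mathbf{S}|\mathbf{V}^\prime_{\mathcal{B}^\ast})=\mathsf{H}(\mathbf{S}),\]
and equality holds throughout.

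The second bracket is at most $\epsilon_1$ by Lemma \ref{le: loss entropy}. That lemma supplies a $\sigma_1$ such that for all unauthorized $\mathcal{B}$ and all $p>\sigma_1$ the difference is below $\epsilon_1$. Since there are only finitely many subsets $\mathcal{B}$ of $[n]$, we can take $\sigma_1$ to be the maximum over all of them if the lemma's constant depends on $\mathcal{B}$. Hence $\Delta(|\mathcal{S}|)<\epsilon_1$ for all $|\mathcal{S}|=p>\sigma_1$. Together with $\mathsf{H}(\mathbf{S})=\log_2 p\neq 0$, this is exactly the asymptotic perfectness condition.

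The main obstacle is the uniformity in $\mathcal{B}$ and the reduction to the worst case. Lemma \ref{le: i-v-nosecret} is stated for all unauthorized sets but proved only for the maximal configuration. The monotonicity argument above should close this gap, provided the extra conditions (v) and (vi) for the added participants merely add information. I expect that to be routine. The remaining work is bookkeeping: noting that the bound from Lemma \ref{le: loss entropy} does not depend on $\mathcal{B}$ beyond a finite maximum.
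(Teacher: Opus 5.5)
Your main line is the paper's proof. You split $\Delta(|\mathcal{S}|)$ through $\mathsf{H}(\mathbf{S}|\mathbf{V}^\prime_{\mathcal{B}})$, use Lemma~\ref{le: i-v-nosecret} for the first piece and Lemma~\ref{le: loss entropy} for the second. That lemma is already stated uniformly in $\mathcal{B}$, so your finite-maximum remark is harmless but not needed.

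The one ingredient you add is false, though. An unauthorized $\mathcal{B}$ need not extend to a $\mathcal{B}^\ast$ with $|\mathcal{B}^\ast\cap[N_\ell]|=t_\ell-1$ for every $\ell$. Such a $\mathcal{B}^\ast$ has exactly $t_\ell-t_{\ell-1}$ members in $\mathcal{P}_\ell$ for $\ell\geq 2$. So no $\mathcal{B}$ with more members than that in some $\mathcal{P}_\ell$, $\ell\geq 2$, fits inside one. For example, take $m=2$, $t_1=2$, $t_2=3$: two members of $\mathcal{P}_2$ form an unauthorized set, and adding any participant makes it authorized. Your monotonicity sandwich therefore has no $\mathcal{B}^\ast$ to compare with.

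The right repair is to redo the counting of Lemma~\ref{le: i-v-nosecret} for a general $\mathcal{B}$. Let $k_\ell=|\mathcal{B}\cap[N_\ell]|\leq t_\ell-1$. The $\widetilde{f}_\ell$ compatible with a candidate $\widetilde{s}$ are the polynomials of degree $<t_\ell$ with prescribed values at the $k_\ell+1$ points $\{i-1: i\in\mathcal{B}\cap[N_\ell]\}\cup\{n\}$. These points are distinct in $\mathbb{F}_p$ since $p>n$, so the family is affine of size $p^{t_\ell-1-k_\ell}$. Likewise the compatible $\widetilde{f}$ number $p^{t_m-1-|\mathcal{B}|}$. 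These counts do not depend on $\widetilde{s}$, which is exactly what that lemma's argument needs.

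You have also put the difficulty in the wrong place. The step you treat as bookkeeping, invoking Lemma~\ref{le: loss entropy}, is where your argument and the paper's both break. Assume $m\geq 2$ and take the worst-case $\mathcal{B}$ of Lemma~\ref{le: i-v-nosecret}, where each $\widetilde{s}$ determines a unique compatible tuple. Pick any $i\in[N_{m-1}]\setminus\mathcal{B}$; one exists, since $|\mathcal{B}\cap[N_{m-1}]|=t_{m-1}-1<N_{m-1}$.

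Condition (vii) then demands a single $\widetilde{s}_i$ satisfying two equations. First, $h_m(\widetilde{s}_i)=\widetilde{f}(i)-I_i^{(m)}$. Second, $h_{m-1}(\widetilde{s}_i)$ equals $\widetilde{u}_{i-1}^{(m-1)}$ or $\widetilde{u}_{i-1}^{(m-1)}-I_i^{(m-1)}$. Both targets are affine in $\widetilde{s}$ with nonzero slope, because the relevant Lagrange coefficients are nonzero when $i\notin\mathcal{B}$. So as $\widetilde{s}$ varies, the target pair runs along a line in $\mathbb{F}_p^2$. For generic $h$'s, that line meets the at most $p$ points $\{(h_m(x),h_{m-1}(x)): x\in\mathbb{F}_p\}$ in only $O(1)$ places.

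An unbounded coalition can therefore narrow $s$ down to $O(1)$ candidates. Then $\mathsf{H}(\mathbf{S})-\mathsf{H}(\mathbf{S}|\mathbf{V}_{\mathcal{B}})$ grows like $\log_2 p$ instead of tending to $0$. Condition (vii) does not remove a negligible fraction of candidate tuples, as the paper's proof of that lemma asserts; it removes almost all of them.

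This is not a point where you diverge from the paper, since its proof uses Lemma~\ref{le: loss entropy} in exactly the same way. But it means neither argument establishes the entropy condition of Definition~\ref{def: Asymptotically Ideal DHSS}. One-way functions can at best support a computational claim, which would need a different definition and proof.
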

\begin{proof}
Let $\mathcal{B}\notin \Gamma$. By Lemma \ref{le: loss entropy}, we get that for all $\epsilon_1>0$, there is a positive integer $\sigma_1$ such that for all $\mathcal{B}\notin\Gamma$ and $|\mathcal{S}|=p>\sigma_1$, it holds that
                               \[0<\mathsf{H}(\mathbf{S}|\mathbf{V}^\prime_{\mathcal{B}})-\mathsf{H}(\mathbf{S}|\mathbf{V}_{\mathcal{B}})<\epsilon_1.\]
Since $\mathsf{H}(\mathbf{S}|\mathbf{V}^\prime_{\mathcal{B}})=\mathsf{H}(\mathbf{S})$ by Lemma \ref{le: i-v-nosecret}, then we have
                              \[0<\mathsf{H}(\mathbf{S})-\mathsf{H}(\mathbf{S}|\mathbf{V}_{\mathcal{B}})<\epsilon_1.\]
This shows that our scheme is asymptotically perfect according to Definition \ref{def: Asymptotically Ideal DHSS}.
\end{proof}

\begin{theorem}\label{Theorem: summary}
Let $\mathcal{P}$ be a set of $n$ participants, and it is partitioned into $m$ disjoint subsets $\mathcal{P}_1, \mathcal{P}_2,\ldots, \mathcal{P}_m$. Denote by $n_{\ell}=|\mathcal{P}_{\ell}|$ for $\ell\in [m]$. For a threshold sequence $t_1, t_2, \ldots, t_m$ such that $1\leq t_1< t_2<\cdots<t_m\leq n$ and $t_{\ell}\leq n_{\ell}$ for $\ell\in [m]$, our scheme is an asymptotically ideal DHSS scheme with share size $\log_2 p$ bits for $p>n$, and the dealer is required polynomial time.
\end{theorem}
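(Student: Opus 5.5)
The plan is to assemble Theorem \ref{Theorem: summary} from the results already proved, checking the three conditions of Definition \ref{def: Asymptotically Ideal DHSS}, then bounding share size and dealer complexity directly from the construction.

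\emph{Correctness and asymptotic perfectness.} Correctness is Theorem \ref{Theorem:Correctness of our scheme}. We may assume $\mathcal{A}\in\Gamma$ and take $\ell$ minimal with $|\mathcal{A}\cap[N_\ell]|\geq t_\ell$. Then $\mathcal{A}\cap[N_\ell]$ is used, and it has the form required in the reconstruction phase: its largest index lies in $\mathcal{P}_{\ell'}$ with $\ell'\le\ell$, and it has at least $t_\ell\ge t_{\ell'}$ members. We also note that $t_\ell\le n_\ell$ guarantees that the indices $[t_\ell]$ used as initial values lie within $[N_\ell]$, so the public values $I_i^{(\ell)}$ are well defined. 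Asymptotic perfectness is Theorem \ref{Th:Asymptotic perfectness}, which combines Lemma \ref{le: loss entropy} and Lemma \ref{le: i-v-nosecret}.

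\emph{Asymptotic maximum information rate.} I will take the secret uniform in $\mathcal{S}=\mathbb{F}_p$, so $\mathsf{H}(\mathbf{S})=\log_2 p$. Every share $s_i$ lies in $\mathbb{F}_p$: it is either a uniformly random $c_i$ or a value $f(i)$ of a random polynomial. Hence $\mathsf{H}(\mathbf{S}_i)\le\log_2 p$. The ratio $\max_i\mathsf{H}(\mathbf{S}_i)/\mathsf{H}(\mathbf{S})$ is therefore at most $1\le 1+\epsilon_2$ for every $p$, so any $\sigma_2$ works and the information rate equals one. Each share is one element of $\mathbb{F}_p$, giving share size $\log_2 p$ bits. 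The requirement $p>n$ is used so that $1,\dots,n$ (and $0$) are distinct in $\mathbb{F}_p$, which is what makes the Lagrange interpolations in Case 2 valid. In Case 1 the interpolation nodes are $i-1$ together with $n$, and these are also distinct modulo $p$ since $p>n$.

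\emph{Dealer complexity.} I will count the operations in each step, each being a field operation modulo $p$ or one evaluation of a one-way function, assumed polynomial time.
\begin{itemize}
\item Step 1 samples $N_{m-1}+t_m$ field elements and evaluates $f$ at $n$ points, costing $O(nt_m)$.
\item Step 2 expands $(x-\alpha_\ell)^{t_\ell}$ for each $\ell$, costing $O(t_\ell^2)$, and computes $t_\ell$ hash values.
\item Step 3 generates the $\ell$-th recurrence up to index $n$, costing $O(nt_\ell)$. Alternatively one can interpolate $f_\ell$ as in Case 1. It also publishes $r_\ell$ and the $I_i^{(\ell)}$.
\item Step 4 costs $O(N_{m-1})$ hash evaluations.
\end{itemize}
The total is $O(mnt_m)$ field operations plus $O(mn)$ one-way-function calls, which is polynomial in $n$ and $\log p$. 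No random trial-and-error or checking of exponentially many matrices is needed, unlike \cite{Brickell1989,Tassa2007}, so the construction is explicit.

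The main obstacle is not computational but definitional. The asymptotic perfectness inherited from Lemma \ref{le: loss entropy} is only heuristic: it rests on the one-way functions behaving randomly. In the summary theorem I will state explicitly that this security is computational, and that the remaining three items (correctness, rate one, polynomial-time dealer) hold unconditionally.
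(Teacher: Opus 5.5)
\textbf{Verdict: same route as the paper, but the perfectness step you import does not hold, so neither proof establishes the theorem as stated.}

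Your outline is the paper's proof, made more careful. Correctness comes from Theorem~\ref{Theorem:Correctness of our scheme}, and your passage to $\mathcal{A}\cap[N_\ell]$ with $\ell$ minimal makes explicit the reduction the paper hides behind ``without loss of generality''. Asymptotic perfectness comes from Theorem~\ref{Th:Asymptotic perfectness}, rate one follows from $\mathcal{S}=\mathcal{S}_i=\mathbb{F}_p$, and you give an explicit polynomial operation count. Those parts are fine.

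The gap is the perfectness step, and your closing caveat gets its mechanism backwards. Lemma~\ref{le: loss entropy} asserts that condition (vii) discards only a negligible fraction of candidate secrets. If the $h_\ell$ behave like random functions, it discards almost all of them. Each $i\in[N_{m-1}]\setminus\mathcal{B}$, say in $\mathcal{P}_{\ell_2}$, requires a single unknown $\widetilde{s}_i$ to hit prescribed values under at least two functions, $h_{\ell_2}$ and $h_m$. A wrong candidate manages this with probability about $1/p$. So random behaviour of the one-way functions is exactly what destroys the entropy claim, not what supports it.

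A minimal instance makes this rigorous. Take $m=2$, $\mathcal{P}_1=\{1\}$, $\mathcal{P}_2=\{2,3\}$, $t_1=1$, $t_2=2$, $f(x)=s+dx$, and the unauthorized set $\mathcal{B}=\{2\}$. This set holds $s_2=s+2d$ and sees $r_1=s-\alpha_1^{3}h_1(c_1)$ and $I_1^{(2)}=s+d-h_2(c_1)$. These give
\[
2I_1^{(2)}-s_2-r_1=g(c_1),\qquad g:=\alpha_1^{3}h_1-2h_2,\qquad s=r_1+\alpha_1^{3}h_1(c_1).
\]
Writing $\mathbf{C}_1$ for the random variable of $c_1$, this yields $\mathsf{H}(\mathbf{S}\mid\mathbf{V}_{\mathcal{B}})\le\mathsf{H}(\mathbf{C}_1\mid\mathbf{V}_{\mathcal{B}})\le\log_2\bigl(\tfrac{1}{p}\sum_{y}|g^{-1}(y)|^2\bigr)$. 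This bound is about $1$ bit for random-like $h_1,h_2$, and $0$ if $g$ is injective. The entropy loss is therefore about $\log_2 p-1$, which grows with $p$ instead of tending to $0$.

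Hence the asymptotic perfectness of Definition~\ref{def: Asymptotically Ideal DHSS} fails for generic one-way functions. Neither your argument nor the paper's, which rests on the same lemma, establishes the theorem as stated.

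Relabelling the security as ``computational'' does not repair this. The definition is entropy-based, so you would be proving a different statement. You would also still need to define a computational hiding notion and prove it, e.g.\ by a reduction to inverting maps like $g$, and nothing in the paper supplies that. Without such an argument, what you can claim is correctness, $\log_2 p$-bit shares with rate one, and a polynomial-time dealer.
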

\begin{proof}
From Definition \ref{def: Asymptotically Ideal DHSS}, Theorem \ref{Theorem:Correctness of our scheme} and Theorem \ref{Th:Asymptotic perfectness}, we get that our scheme is an asymptotically perfect DHSS scheme. Since the secret space and every share space are the same finite field $\mathbb{F}_p$, the information rate is $\rho=1$, and the share size is $\log_2 p$ bits for $p>n$. Therefore, our scheme is asymptotically ideal. It is easy to check that the dealer of our scheme is required polynomial time.
\end{proof}

 Compared with the ideal DHSS schemes with unconditional security proposed by Brickell \cite{Brickell1989}, Tassa \cite{Tassa2007} and Chen et al.\cite{Chenqi2022}, our scheme is asymptotically ideal, computationally secure and publishes many values, but it has a smaller share size. Compared with the DHSS schemes of Lin et al. \cite{Linchanglu2009}, Harn et al. \cite{Harn-Miao2014}, Ersoy et al. \cite{Oguzhan-Ersoy2016} and Yang et al. \cite{Yangjing2024}, our scheme is secure and asymptotically ideal at the same time.

\section{Conclusion}\label{Sec: conclusion}
In this work, we construct an asymptotically ideal DHSS scheme with a small share size. The dealer of our scheme is required polynomial time, but our scheme has many public values. How to reduce the number of public values is our future work.
\section*{Appendix: Security analysis of Ghodosi et al. scheme}
\appendix
In this section, we review the scheme of Ghodosi et al.\cite{Ghodosi1998}, which is called GPN scheme in the following. The authors claimed that GPN scheme was a perfect and ideal DHSS scheme. We point that GPN scheme is insecure, and two counterexamples are given to attack the correctness and privacy of GPN scheme.

\section[\appendixname~\thesection]{Review of GPN scheme}\label{Subsec: review of GPN}
\setcounter{equation}{0}
\renewcommand{\theequation}{A\arabic{equation}}

Let $\mathcal{P}$ be a set of $n$ participants. We introduce GPN scheme with total levels $m=2$, namely, the set $\mathcal{P}$ is partitioned into $2$ disjoint subsets $\mathcal{P}_1$ and $\mathcal{P}_2$. Denote by $n_{1}=|\mathcal{P}_{1}|$ and $n_{2}=|\mathcal{P}_{2}|$, then $n=n_1+n_2$. Let $t_1, t_2$ be thresholds such that $1\leq t_1<t_2<n_2$ and $t_{1}\leq n_{1}$. The GPN scheme consists of a share generation phase and a secret reconstruction phase.

$\bullet$ \emph{Share Generation Phase}: Let $s\in\mathbb{F}_p$ be a secret, where $p$ is a big prime such that $p>n$. The dealer chooses two random polynomials
        \[f_1(x)=a_{1,0}+a_{1,1}x+\cdots+a_{1,t_1-1}x^{t_1-1}\in \mathbb{F}_p[x],\]
        \[f_2(x)=a_{2,0}+a_{2,1}x+\cdots+a_{2,n_1+t_2-t_1}x^{n_1+t_2-t_1}\in \mathbb{F}_p[x],\]
and $n$ publicly known distinct non-zero elements $x_1, x_2, \ldots, x_n\in \mathbb{F}_{p}^{*}$ such that
                          \[f_1(0)=f_2(0)=s\in \mathbb{F}_p,~\mathrm{and}~f_1(x_i)=f_2(x_i)\in \mathbb{F}_p~\mathrm{for}~i\in [n_1].\]
The dealer computes $s_i=f(x_{i})\in \mathbb{F}_p$ for $i\in [n]$, and distributes each share $s_i$ to the $i$-th participant $P_{i}$.

\begin{itemize}
    \item \emph{Secret Reconstruction Phase}: Assume that
      \[\mathcal{A}=\{i_{1},i_{2},\dots,i_{|\mathcal{A}|}\}\subseteq \mathcal{P}, i_{1}<i_{2}<\cdots<i_{|\mathcal{A}|},i_{|\mathcal{A}|}\in \mathcal{P}_\ell~\mathrm{and}~|\mathcal{A}|=t_{\ell}.\]
\end{itemize}

   \emph{Case 1}. If $\ell=1$, participants of $\mathcal{A}$ use their shares $\{s_i: i\in \mathcal{A}\}$, publicly known values $\{x_i:i\in \mathcal{A}\}$ and the Lagrange interpolating formula to reconstruct the secret
                           \[s=\sum_{i\in {\mathcal{A}}}\left(f(x_{i})\prod_{j\in {\mathcal{A}}, j\neq i}\frac{-x_{j}}{x_{i}-x_{j}}\right) \in \mathbb{F}_{p}.\]

   \emph{Case 2}. If $\ell=2$, participants of $\mathcal{A}$ use their shares $\{s_i: i\in \mathcal{A}\}$, publicly known values $\{x_i:i\in \mathcal{A}\}$ to reconstruct the secret by solving the following system of equations over $\mathbb{F}_p$.
      \begin{equation}\label{Equation:GPN reconstruction}
       \begin{cases}
		f_1(x_i)=f_2(x_i),i\in [n_1],\\
		f_2(x_i)=s_i,i\in \mathcal{A}.\\
	    \end{cases}
      \end{equation}
\begin{remark}
    In the system of equations (\ref{Equation:GPN reconstruction}), there are exactly $(n_1+t_2)$ equations, and $(n_1+t_2)$ unknowns
    \[s, a_{1,1}, a_{1,2},\ldots, a_{1,t_1-1}, a_{2,1},\ldots, a_{2,n_1+t_2-t_1}.\]
Based on this condition, the authors of GPN scheme claimed that the system of equations (\ref{Equation:GPN reconstruction}) has a unique solution, which means that the secret can be reconstructed. GPN scheme is said to be a perfect and ideal DHSS scheme with access structure
 \[\Gamma_1=\{\mathcal{A} \subseteq \mathcal{P}: \exists \ell \in \{1,2\}~such~that~|\mathcal{A}\cap(\bigcup_{w=1}^{\ell}\mathcal{P}_{w})|\geq t_{\ell}\}.\]
\end{remark}

\section{Counterexamples}
\setcounter{equation}{0}
\renewcommand{\theequation}{B\arabic{equation}}

Let $\mathcal{P}$ be a set of $n=7$ participants. It is partitioned into $m=2$ disjoint subsets $\mathcal{P}_1=\{P_1,P_2,P_3\}, \mathcal{P}_2=\{P_4,P_5,P_6,P_7\}$ such that $n_{1}=|\mathcal{P}_{1}|=3$ and $n_{2}=|\mathcal{P}_{2}|=4$. Let $t_1=2, t_2=3$ be the threshold sequence, and the GPN scheme consists of the following share generation phase and the secret reconstruction phase.

$\bullet$ \emph{Share Generation Phase}: Let $s\in\mathbb{F}_{71}$ be a secret. The dealer chooses two random polynomials
        \[f_1(x)=a_{1,0}+a_{1,1}x, f_2(x)=a_{2,0}+a_{2,1}x+\cdots+a_{2,4}x^{4}\in \mathbb{F}_{71}[x],\]
and $7$ publicly known distinct non-zero elements $x_1, x_2, \ldots, x_7\in \mathbb{F}_{71}^{*}$ such that
                          \[f_1(0)=f_2(0)=s\in \mathbb{F}_{71},~\mathrm{and}~f(x_i)=f_2(x_i)\in \mathbb{F}_p~\mathrm{for}~i\in \{1,2,3\}.\]
The dealer computes $s_i=f(x_{i})\in \mathbb{F}_p$ for $i\in \{1,2,\ldots,7\}$, and distributes each share $s_i$ to the $i$-th participant $P_{i}$.

\begin{itemize}
    \item Secret \emph{Reconstruction Phase}: Assume that
      \[\mathcal{A}=\{i_{1},i_{2},\dots,i_{|\mathcal{A}|}\}\subseteq \mathcal{P}, i_{1}<i_{2}<\cdots<i_{|\mathcal{A}|},i_{|\mathcal{A}|}\in \mathcal{P}_\ell~\mathrm{and}~|\mathcal{A}|=t_{\ell}.\]
\end{itemize}

   \emph{Case 1}. If $\ell=1$, participants of $\mathcal{A}$ use their shares $\{s_i: i\in \mathcal{A}\}$, publicly known values $\{x_i:i\in \mathcal{A}\}$ and the Lagrange interpolating formula to reconstruct the secret
                           \[s=\sum_{i\in {\mathcal{A}}}\left(f(x_{i})\prod_{j\in {\mathcal{A}}, j\neq i}\frac{-x_{j}}{x_{i}-x_{j}}\right) \in \mathbb{F}_{71}.\]

   \emph{Case 2}. If $\ell=2$, participants of $\mathcal{A}$ use their shares $\{s_i: i\in \mathcal{A}\}$, publicly known values $\{x_i:i\in \mathcal{A}\}$ to reconstruct the secret by solving the following system of equations over $\mathbb{F}_p$.
      \begin{equation}\label{Equation: counterexmple GPN reconstruction}
       \begin{cases}
		f_1(x_i)=f_2(x_i),i\in \{1,2,3\},\\
		f_2(x_i)=s_i,i\in \mathcal{A}.\\
	    \end{cases}
      \end{equation}

As $x_1, x_2, \ldots, x_7\in \mathbb{F}_{71}^{*}$ are publicly known distinct non-zero elements, and they are randomly selected by the dealer, we will give two strategies to choose these values to attack the correctness and privacy of GPN scheme, respectively.

\begin{example}\label{Example 1} Assume that $x_1=1,x_2=2,x_3=3,x_4=4,x_5=7,x_6=8,x_7=9$, and $\mathcal{A}_1=\{P_4,P_5,P_6\}$. Clearly, $|\mathcal{A}_1\cap(\bigcup_{w=1}^{2}\mathcal{P}_{w})|=3= t_{2}$ which shows that $\mathcal{A}_1\in \Gamma_1$. We will prove that participants of $\mathcal{A}_1$ cannot reconstruct the secret.

From Equation (\ref{Equation: counterexmple GPN reconstruction}), participants of $\mathcal{A}_1$ will establish a system of equations
\[
	\begin{cases}
		(a_{2,1}-a_{1,1})x_{i}+a_{2,2}x_{i}^{2}+a_{2,3}x_{i}^{3}+a_{2,4}x_{i}^{4}=0,i \in \{1,2,3\}, \\
		s+a_{2,1}x_{i}+a_{2,2}x_{i}^{2}+a_{2,3}x_{i}^{3}+a_{2,4}x_{i}^{4}=s_{i},i \in \{4,5,6\}.
	\end{cases}\]
where $x_1=1,x_2=2,x_3=3,x_4=4,x_5=7,x_6=8$ are publicly known values, and participants of $\mathcal{A}_1$ know their own their shares $s_4, s_5, s_6$. This system of equation can be written as
\[\begin{bmatrix}		
		0 & -1 & 1 & 1 & 1 & 1 \\
		0 & -2 & 2 & 2^2 & 2^3 & 2^4\\
		0 & -3 & 3 & 3^2 & 3^3 & 3^4\\
		1 & 0 & 4 & 4^2 & 4^3 & 4^4\\
		1 & 0 & 7 & 7^2 & 7^3 & 7^4\\
		1 & 0 & 8 & 8^2 & 8^3 & 8^4 \\
	\end{bmatrix}
    \begin{bmatrix}
		s\\a_{1,1}\\a_{2,1}\\a_{2,2} \\a_{2,3}\\a_{2,4}
	\end{bmatrix}
	=\begin{bmatrix}
		0\\0\\0\\s_4\\s_5\\s_6		
	\end{bmatrix}.\]
Its general solution over $\mathbb{F}_{71}$ is
\[
	\begin{cases}
		a_{1,1}=59s+4s_4+8s_5,\\
		a_{2,1}=65s+61s_4+16s_5,\\
		a_{2,2}=60s+2s_4+9s_5,\\
		a_{2,3}=6s+57s_4+8s_5,\\
		a_{2,4}=70s+26s_4+46s_5.
	\end{cases}\]
where $s_6$ disappears, that is because
\begin{displaymath}
    \begin{aligned}
           s_6=&s+8a_{2,1}+8^{2}a_{2,2}+8^{3}a_{2,3}+8^{4}a_{2,4}\\
            =&47(s+4a_{2,1}+4^{2}a_{2,2}+4^{3}a_{2,3}+4^{4}a_{2,4})+25(s+7a_{2,1}+7^{2}a_{2,2}+7^{3}a_{2,3}+7^{4}a_{2,4})\\
            =&47s_4+25s_5\in \mathbb{F}_{71}.
    \end{aligned}
 \end{displaymath}
It is found that the secret $s$ is a freedom unknown in the general solution, which implies that participants of $\mathcal{A}_1$ cannot reconstruct the secret.
\end{example}

\begin{example}\label{Example 2} Assume that $x_1=1,x_2=2,x_3=3,x_4=7,x_5=35,x_6=9,x_7=10$, and $\mathcal{A}_2=\{P_4,P_5\}$. Clearly, $|\mathcal{A}_2\cap\mathcal{P}_{1}|=0<t_{1}$ and $|\mathcal{A}_2\cap(\bigcup_{w=1}^{2}\mathcal{P}_{w})|=2<t_{2}$. This shows that $\mathcal{A}_2\notin \Gamma_1$. We will prove that participants of $\mathcal{A}_2$ can reconstruct the secret.

From Equation (\ref{Equation: counterexmple GPN reconstruction}), participants of $\mathcal{A}_2$ will establish a system of equations
\[
	\begin{cases}
		(a_{2,1}-a_{1,1})x_{i}+a_{2,2}x_{i}^{2}+a_{2,3}x_{i}^{3}+a_{2,4}x_{i}^{4}=0,i \in \{1,2,3\}, \\
		s+a_{2,1}x_{i}+a_{2,2}x_{i}^{2}+a_{2,3}x_{i}^{3}+a_{2,4}x_{i}^{4}=s_{i},i \in \{4,5\}.
	\end{cases}\]
where $x_1=1,x_2=2,x_3=3,x_4=7,x_5=35$ are publicly known values. This system of equations can be written as
\[
	\begin{bmatrix}
		
		0 & -1 & 1 & 1 & 1 & 1 \\
		0 & -2 & 2 & 2^2 & 2^3 & 2^4\\
		0 & -3 & 3 & 3^2 & 3^3 & 3^4\\
		1 & 0 & 7 & 7^2 & 7^3 & 7^4\\
		1 & 0 & 35 & 35^2 & 35^3 & 35^4 \\
	\end{bmatrix}
    \begin{bmatrix}
		s\\a_{1,1}\\a_{2,1}\\a_{2,2} \\a_{2,3}\\a_{2,4}
	\end{bmatrix}
	=\begin{bmatrix}
		0\\0\\0\\s_4\\s_5	
	\end{bmatrix}\]
Its general solution over $\mathbb{F}_{71}$ is
\[
	\begin{cases}
		s=19s_4+53s_5,\\
        a_{1,1}=22a_{24}+38s_4+33s_5,\\
		a_{2,1}=16a_{24}+38s_4+33s_5,\\
		a_{2,2}=11a_{24},\\
		a_{2,3}=65a_{24}.\\
	\end{cases}\]
It is found that $a_{24}$ is a freedom unknown in the general solution. As a result, this system of equations have many solutions, but the secret $s=19s_4+53s_5\in \mathbb{F}_p$ is determined by participants of $\mathcal{A}_2=\{P_4,P_5\}$.
\end{example}

\begin{remark}
     GPN scheme is insecure, which is because the number of equations equal to the number of unknowns is not a sufficient and necessary condition for the system of equations to have a unique solution, such as the Counterexample \ref{Example 1}. Besides, even if a system of equations has many solutions, one of the unknowns (the secret) may be uniquely determined, such as the Counterexample \ref{Example 2}.
\end{remark}

\end{document}